\title{Probabilistic team semantics}
\author{Arnaud Durand\inst{1} \and Miika Hannula\inst{2} \and Juha Kontinen\inst{3} \and Arne Meier\inst{4} \and Jonni~Virtema\inst{5}}
\institute{Institut de Math\'ematiques de Jussieu - Paris Rive Gauche, CNRS UMR 7586 - Universit\'e Paris Diderot, \email{durand@math.univ-paris-diderot.fr}
\and Department of Computer Science, University of Auckland, \email{m.hannula@auckland.ac.nz}
 \and Department of Mathematics and Statistics, University of Helsinki, \email{juha.kontinen@helsinki.fi} \and Leibniz Universität Hannover, Institut für Theoretische Informatik, \email{meier@thi.uni-hannover.de}
 \and Databases and Theoretical Computer Science, Hasselt University, \\ \email{jonni.virtema@uhasselt.be} 
 }
\begin{document}
 \maketitle

\begin{abstract}
Team semantics is a semantical framework for the study of dependence and independence concepts ubiquitous in many areas such as  databases and statistics. In recent works team semantics has been generalised to accommodate also multisets and probabilistic dependencies. In this article we study a  variant of probabilistic team semantics and relate this framework to a Tarskian two-sorted logic. We also show that very simple quantifier-free formulae of our logic give rise to $\NP$-hard model checking problems.
 \end{abstract}

\section{Introduction}
Team semantics is the modern approach for the study of logics of dependence and independence. The systematic development of  team semantics began by the introduction of  Dependence Logic in 2007 \cite{vaananen07} although  the key ingredients of the new semantics were already introduced by Hodges 1997 \cite{MR1465612}. In team semantics, satisfaction of formulae is defined not via single assignments but via sets of assignments (teams). Sets of assignments enables one to introduce a multitude of interesting atoms to the logic such as dependence, independence, and inclusion atoms:
 $$\dep(\tuple x, y), \ \tuple y \perp_{\tuple x} \tuple z\, \textrm{ and } \tuple x \subseteq \tuple y$$ 
 that do not make sense with respect to a single assignment. 
Independence logic, introduced by Grädel and Väänänen \cite{gradel13}, extends first-order logic with
independence atoms. The independence atom $\tuple y \perp_{\tuple x} \tuple z$ holds if the value of $\tuple z$ does not tell us anything new about the value of $\tuple y$ when the value of $\tuple x$ is fixed. 
By viewing a team $X$ with domain $\{x_1,\ldots, x_{n}\}$ as a database table over attributes $x_1,\ldots, x_{n}$,   dependence, inclusion, and independence atoms correspond exactly to functional, inclusion, and embedded multivalued dependencies (EMVDs), see, e.g., \cite{DBLP:conf/wollic/KontinenLV13,HannulaKLinki,HannulaK16}. 
Moreover EMVDs and probabilistic conditional independence $\tuple Y \perp\tuple Z|\tuple X$ have significant connections, confer, e.g., \cite{Gyssens2014628,wbw00,CoranderHKPV16}.
Multiteam semantics, introduced by Durand~et~al.~\cite{foiks/0001HKMV16}, is the multiset analogue of team semantics. This setting enables the logical study of probabilistic dependencies such as the probabilistic conditional independence atoms $\pcixyz$ that inherit their semantics from the corresponding notion $\tuple Y \perp\tuple Z|\tuple X$ from statistics.
One of the advantages of multiteam semantics is that it allowes to study the interplay of atoms such as $\dep(\tuple x, y)$, $\tuple y \perp_{\tuple x} \tuple z$, and  $\pcixyz$ in a unified framework.

In this paper, we focus on probabilistic team semantics. A probabilistic team is a set of assignments endowed with a probability distribution that maps each assignment of the set to a ratio.  There is a vast literature on probabilistic logics but so far only few works study probabilistic team semantics. The teams that arise from applications (e.g., database tables) often contain duplicate rows leading naturally to multiteams (i.e., multiset analogues of teams). Furthermore, finite multiteams can be viewed as probabilistic teams endowed with the counting measure induced by the multiplicities. Importantly, in many applications, duplicate rows can store relevant information; e.g., if a table is used to store an outcome of a poll or a collection of outcomes of measurements. In these cases the interest lies in the distribution of the data and not so much in the size of the sample. Hence it makes sense to abstract from the concrete data (multiteams) to the distribution of data (probabilistic teams).
We consider a logic that uses probabilistic independence $\pcixyz$ and
marginal identity atoms $\tuple x \approx \tuple y$ as primitives in the setting of probabilistic team semantics. These atoms were recently introduced by Durand~et~al.~\cite{foiks/0001HKMV16} in the context of multiteam semantics. 
The marginal identity atom $\tuple x \approx \tuple y$ expresses that in a team the distribution of values for the variables $\tuple x$ coincides with that of $\tuple y$.
We relate this logic to a natural variant of (two-sorted) existential second-order logic with quantification over rational distributions. We also consider the complexity of model checking and show that very simple formulae using $\tuple x \approx \tuple y$ give rise to $\NP$-hard model checking problems.  
\begin{example}
Consider a database table that lists results of experiments.  The data can be regarded either as a multiteam or as the related probabilistic team using the counting measure; both interpretations having its own advantages. Each record corresponds to outcomes of measurements obtained simultaneously in two locations. The table has four attributes \textsf{Test1} and \textsf{Test2} that range over the possible types of measurements and \textsf{Outcome1} and \textsf{Outcome2} that range over outcomes of the measurements.
The probabilistic independence atom $\mathsf{Test1} \pperp \mathsf{Test2}$ expresses that the types of measurements are independently picked in the two locations. The marginal identity atom $(\mathsf{Test1},\mathsf{Outcome1}) \approx (\mathsf{Test2},\mathsf{Outcome2})$ expresses that the distributions of results are the same in both test sites. The formula $\mathsf{Test1} = \mathsf{Test2} \lor (\mathsf{Test1} \neq \mathsf{Test2} \land \mathsf{Outcome1} \pperp \mathsf{Outcome2})$ expresses that there is no correlation between outcomes of the different measurements.
\end{example}

\begin{example}
 Consider a database table that describes voting behaviour in two different elections by some sample of voters. Attributes of the table are \textsf{Election1} and \textsf{Election2} that range over political parties. Each record corresponds to a voting behaviour of a voter in the sample. The table then gives rise to a probabilistic team that approximates the voting behaviour of the population. The complex formula $\mathsf{Election1} = \mathsf{Election2} \lor (\mathsf{Election1} \neq \mathsf{Election2} \land \mathsf{Election1} \approx \mathsf{Election2})$ expresses that each party obtained the same portion of swing voters in the second election that it got in the first election.
\end{example}

It is well known that the satisfaction relation of team semantics can be formalised in (existential) second-order logic when the team is encoded by an additional relation. This result gives an upper bound and a ``yardstick'' for the expressive power of many of the logics studied in the team semantics literature. 
One of the motivations for the current article is to develop an analogous yardstick of expressivity for logics over multiteams and probabilistic teams. We use a variant of existential second-order logic over two-sorted structures for this purpose whose first sort encodes the first-order structure and whose second sort consists of the closed interval $[0,1]$ of rational numbers $\rat$ over which arithmetic operations of multiplication and sum can be applied. Distributions from the first sort ranging over the second sort $\rat$ encode probabilistic teams.

In the second part of the article we consider the complexity of model-checking in probabilistic and multiteam  semantics and show that, over multiteams,  very simple formulae using $\tuple x \approx \tuple y$ give rise to $\NP$-hard model checking problems. This  result is in drastic contrast with the influential result of Galliani and Hella \cite{gh13} that inclusion atoms in the ordinary team semantics give rise to a logic equivalent with (a fragment of) the least fixed point logic and accordingly is contained in $\Ptime$.   Interestingly our reduction does not  work under the slightly different probabilistic interpretation of disjunction. It is an open question whether the data-complexity of $\FO(\tuple x \approx \tuple y)$ is in $\Ptime$ for the probabilistic semantics.

\emph{Previous work on probabilistic team semantics:}
Probabilistic versions of dependence logic (and IF-logic) have been previously studied by Galliani, Mann,  Sevenster, and Sandu \cite{Galliani2008,Mann,SevensterS10}. 
 Moreover, Hyttinen et~al.\ \cite{hpv14,aml/HyttinenPV17}  consider so-called quantum team and measure team logics over probabilistic teams and give complete axiomatisation for them.
 It is worth noting, as regards to the connectives and quantifiers, our semantics is similar to the one defined by  Galliani~\cite{Galliani2008} and that the atoms $\pcixyz$ and $\tuple x \approx \tuple y$ were introduced only later by Durand~et~al.~\cite{foiks/0001HKMV16} in the multiteam semantics context.

\section{A variant of existential second-order logic with quantification over rational distributions}
First-order variables are denoted by $x,y,z$ and tuples of first-order variables by $\vec{x},\vec{y},\vec{z}$. The length of the tuple $\vec{x}$ is denoted by $\lvert \vec{x}\rvert$, and for two tuples $\tuple x, \tuple y$  we denote by $\tuple x\setminus \tuple y$ any tuple that lists those elements of $\tuple x$ that do not appear in $\tuple y$. By $\Var(\tuple x)$ we denote the set of variables that appear in the variable sequence $\tuple x$.
A \emph{vocabulary} $\tau$ is a set of relation symbols and function symbols with prescribed arities. We mostly denote relation symbols by $R$ and function symbols by $f$, and the related arities by $\ar(R)$ and $\ar(f)$, respectively.
  A vocabulary is \emph{relational} (resp., \emph{functional}) if it consists of only relation (resp., function) symbols. Similarly, a structure is \emph{relational} (resp., \emph{functional}) if it is defined over a relational (resp., functional) vocabulary. 
We let $\Varfo$ and $\Varso$ denote disjoint countable sets of first-order and function variables (with prescribed arities), respectively. The set of rational numbers in the closed interval $[0,1]$ is denoted by $\rat$. Given a finite set $A$, a function $f\colon A\to\rat$ is called a \emph{(probability) distribution} if $\sum_{s\in A}f(s)=1$. In addition, the empty function is a \emph{distribution}. 

A  relational $\tau$-structure is a  tuple $\A=(A, (R_i^\A)_{R_i\in\tau})$,
where $A$ is a  nonempty set and each $R_i^\A$ is a relation on $A$ (i.e., $R_i^\A\subseteq A^{\ar(R_i)}$). 
In this paper, we consider structures that enrich finite relational $\tau$-structures by adding $\rat$ as a second domain sort and functions that map tuples from $A$ to $\rat$.

\begin{definition}
Let $\tau$ and $\sigma$ be a relational and a functional vocabulary, respectively. A probabilistic $\tau\cup\sigma$-structure is a tuple 
\[
\A = (A,\rat, (R_i^\A)_{R_i\in\tau}, (f_i^\A)_{f_i\in\sigma}),
\]
where $A$ (i.e. the domain of $\A$) is a finite nonempty set, each $R_i^\A$ is a relation on $A$ (i.e., a subset of $A^{\ar(R_i)}$), and each  $f_i^\A$ is a probability distribution from $A^{\ar(f_i)}$ to $\rat$ (i.e., a function such that $\sum_{\vec{a}\in A^{\ar(f_i)}} f_i(\vec{a}) = 1$).
\end{definition}
Note that if $f$ is a  $0$-ary function symbol, then $f^\A$ is
the constant $1$.
Next, we define a variant of functional existential second-order logic with numerical terms ($\ESOf$) that is designed to describe properties of the above probabilistic structures. As first-order terms we have only first-order variables. For a set $\sigma$ of function symbols, the set of numerical $\sigma$-terms $i$ is defined via the following grammar:
 \[
i ::=  f(\vec{x}) \mid i\times i \mid  \SUM_{\vec{x}} i,
\]
where $\vec{x}$ is a tuple of first-order variables from $\Varfo$ and $f\in\sigma$. The value of a numerical term $i$ in a structure $\A$ under an assignment $s$ is denoted by $[i]^\A_s$.
We have the following rules for the numerical terms:
\begin{align*}
&[f(\tuple x)]^\A_s:=f^{\A}(s(\tuple x)), \quad \quad\quad\quad\quad\quad      [i \times j]^\A_s := [i]^\A_s \cdot [j]^\A_s,\\
&[\SUM_{\vec{x}} i(\vec{x},\vec{y})]^\A_s := \sum_{\vec{a}\in A^{\lvert \vec{x} \rvert}} [i(\vec{a},\vec{y})]^\A_s,
\end{align*}
where $\cdot$  and $\sum$ are the multiplication and sum of rational numbers, respectively.
In this context, $i(\vec x,\vec y)$ is a numerical term over variables in $\vec x$ and $\vec y$. 
Note that, in the semantics of $\SUM_{\vec x}i$ the tuple $\vec y$ could be empty.
Furthermore let $\tau$ be a relational vocabulary. The set of $\tau\cup\sigma$-formulae of $\ESOf$ is defined via the following grammar:
\[
\phi ::=   x=y \mid x\neq y \mid i = j \mid i\neq j \mid R(\vec{x}) \mid \neg R(\vec{x}) \mid
 \phi\land\phi \mid \phi\lor\phi \mid \exists x\phi \mid \forall x \phi \mid  \exists f \psi,
 \]
where $i$ is a numerical $\sigma$-term, $R\in\tau$ is a relation symbol, $f\in \Varso$ is a function variable, $\vec{x}$ is a tuple of first-order variables, and $\psi$ is a $\tau\cup(\sigma\cup\{f\})$-formula of $\ESOf$.  Note that the syntax of $\ESOf$ admits of only first-order subformulae to appear in negation normal form. This restriction however does not restrict the expressiveness of the language.

Semantics of $\ESOf$ is defined via probabilistic structures and assignments analogous to first-order logic; note that first-order variables are always assigned to a value in $A$ whereas functions map tuples from $A$ to $\rat$. In addition to the clauses of first-order logic, we have the following semantical clauses:
\begin{align*}
&\A\models_s i=j \Leftrightarrow [i]^\A_s = [j]^\A_s,  \quad\quad\quad\quad\quad \A\models_s i\neq j \Leftrightarrow [i]^\A_s \neq [j]^\A_s,\\
&\A\models_s \exists f \phi \Leftrightarrow \A[h/f]\models_s \phi \text{ for some probability distribution $h\colon A^{\ar(f)} \to \rat$,}
\end{align*}
where $\A[h/f]$ denotes the expansion of $\A$ that interprets $f$ to $h$.

Note that the property of $h$ being a probability distribution can be expressed by the formula $\SUM_{\vec{x}} h(\tuple x)=1$ suggesting that it is not vital whether the restriction to probability distributions is in the semantics or not; in this case, however, $\rat$ would not suffice as a second sort and the set of (non-negative) rationals should be used instead.
Furthermore, for  relating  $\ESOf$ to our probabilistic team logic this assumption is essential. Recall that the constant $1$ is defined by the unique $0$-ary function and is thus essentially included in the language. In structures of size at least $2$, the constant $0$ can be defined by $g(y)$ by the use of the formula
\[
\exists g \exists x \exists y \, (x\neq y \land g(x)=1). \footnote{$f(\tuple x)=0$ is always false for probability distributions $f$ in structures of size $1$.}
\]

In order to get some idea of the expressive power of  $\ESOf$, we note that  the uniformity of a distribution $f$ can be expressed with 
\[\phi(f)\dfn\forall \tuple x\tuple y (f(\tuple x)=0\vee f(\tuple y)=0 \vee f(\tuple x)=f(\tuple y)).\]
Furthermore, let $\frac{p}{q}$ be an arbitrary rational number. For $k\leq p$, denote by $\hat{k}$ the length $\log (p+1)$ bit sequence that encodes $k$, and denote by $\tuple y_{\hat{k}}$ the variable sequence obtained from $\hat{k}$ by replacing bits $0$ and $1$ with variables $y_0$ and $y_1$, respectively. For $l\leq q-p$, define 
 $\tuple z_{\hat{l}}$ analogously in terms of bit sequences of length $\log ((q-p)+1)$. For instance, $(y_0,y_0,\ldots ,y_0,y_0)$ is $\tuple y_{\hat{0}}$ and $(y_0,y_0,\ldots ,y_0,y_1)$ is $\tuple y_{\hat{1}}$. Let $E:=\{\tuple y_{\hat{k}}\tuple z_{\hat{0}}\mid 1\leq k\leq p\}\cup \{\tuple y_{\hat{0}}\tuple z_{\hat{l}}\mid 1\leq l\leq q-p\}$. Note that $E$ is not part of the syntax of our logic, but is used as a shorthand in the following formula.
Now $i(\tuple x)=\frac{p}{q}$ can be described by
\begin{align*}
\phi_{\frac{p}{q}}(\tuple x)\dfn &\exists y_0y_1\exists f \big(y_0\neq y_1 \wedge \bigwedge_{\tuple y\tuple z,\tuple y'\tuple z'\in E}f(\tuple y\tuple z)=f(\tuple y'\tuple z')\wedge \\
&\forall \tuple y\tuple z(\tuple y\tuple z \notin E \leftrightarrow f(\tuple y\tuple z)= 0)  \wedge i(\tuple x)= \SUM_{\tuple y}\tuple y\tuple z_{\hat{0}}\big).
\end{align*}
Note that, by construction, $E$ is finite, and consequently $\phi_{\frac{p}{q}}$ is an $\ESOf$-formula.

\section{Probabilistic Team Semantics}
In this section, we present probabilistic team semantics for probabilistic team logics. Before going to probabilistic semantics, we quickly review the basics of (multi)team semantics.

\subsection{Team and Multiteam Semantics}

Syntactically, team logics are extensions of  first-order logic $\FO$ given by the grammar rules:
\[
\phi ::=   x=y \mid x\neq y \mid R(\vec{x}) \mid \neg R(\vec{x}) \mid (\phi\land\phi) \mid (\phi\lor\phi) \mid \exists x\phi \mid \forall x \phi,
\]
where $\vec{x}$ is a tuple of first-order variables.

Let $D$ be a finite set of first-order variables and $A$ be a nonempty set. A function $s\colon D \to A$ is called an \emph{assignment}. The set $D$ is the \emph{domain} of $s$, and the set $A$ the \emph{codomain} of $s$. For a variable $x$ and $a\in A$, the assignment $s(a/x)\colon D\cup\{x\} \rightarrow A$ is equal to $s$ with the exception that $s(a/x)(x)=a$.

A $\emph{team}$ is a finite set of assignments with a common domain and codomain. Let $X$ be a team with codomain $A$, and let $F\colon X\to \Po(A)\setminus \{\emptyset\}$ be a function. We denote by $X[A/x]$ the modified team $\{s(a/x) \mid s\in X, a\in A\}$, and by $X[F/x]$ the team $\{s(a/x)\mid s\in X, a\in F(s)\}$. Let $\A$ be a $\tau$-structure and $X$ a team with codomain $A$, then we say that $X$ is a team of $\A$.

\begin{definition}
Let $\A$ be a
 $\tau$-structure and $X$ a team of $\A$. The satisfaction relation $\models_X$ for first-order logic is defined as follows:

\begin{tabbing}
left \= $\A \models_{X} (\psi \land \theta)$ \= $\Leftrightarrow$ \= $\forall s\in X: s(\tuple x) \in R^{\A}$\kill
\> $\A \models_{X} x=y$ \> $\Leftrightarrow$ \> $\text{for all } s\in X: s(x)=s(y)$\\ 
\> $\A \models_{X} x\neq y$ \> $\Leftrightarrow$ \> $\text{for all }  s\in X: s(x) \not= s(y)$\\
\> $\A \models_{X} R(\tuple x)$ \> $\Leftrightarrow$ \> $\text{for all }  s\in X: s(\tuple x) \in R^{\A}$\\ 
\> $\A \models_{X} \neg R(\tuple x)$ \> $\Leftrightarrow$ \> $\text{for all } s\in X: s(\tuple x) \not\in R^{\A}$\\
\> $\A\models_{X} (\psi \land \theta)$ \> $\Leftrightarrow$ \> $\A \models_{X} \psi \text{ and } \A \models_{X} \theta$\\
\> $\A\models_{X} (\psi \lor \theta)$ \> $\Leftrightarrow$ \> $\A\models_Y \psi \text{ and } \A \models_Z \theta \text{ for some  $Y,Z\sub X$}$ s.t.\ $Y\cup Z = X$\\
\> $\A\models_{X} \forall x\psi$ \> $\Leftrightarrow$ \> $\A\models_{X[A/x]} \psi$\\
\> $\A\models_{X} \exists x\psi$ \> $\Leftrightarrow$ \> $\A\models_{X[F/x]} \psi \text{ holds for some } F\colon X \to \Po(A)\setminus \{\emptyset\}$.
\end{tabbing}
\end{definition}

\emph{Multiteams} are multiset analogues of  teams. Below we give a short introduction to multiteam semantics, as defined by Durand~et~al.~\cite{foiks/0001HKMV16}, adjusted to the notation used later in this paper. 

\begin{definition}
A \emph{multiset} is a function $\mA\colon A \to \N$. The set $\{a\in A \mid \mA(a)\geq 1\}$ is the set of \emph{elements} of the multiset $\mA$, and $\mA(a)$ is the multiplicity of the element $a$.
A \emph{multiteam} is a multiset $\mX\colon X\to \N$ where $X$ is a team. The domain (codomain, resp.) of $\mX$ is defined as the domain (codomain, resp.) of $X$.
\end{definition}
For a multiset $\mA$, we define the \emph{canonical set representative} $\cset{\mA}$ of $\mA$ by 
\[
\cset{\mA} := \{\, (a,i) \mid a\in A, i\in \N, \ 0<i\leq \mA(a)\,\}.
\]
We say that a multiset $\mA$ is a submultiset of a multiset $\mB$, and write $\mA \subseteq \mB$, if and only if $\cset{\mA} \subseteq \cset{\mB}$. We write $\mA = \mB$ if and only if both $\mA \subseteq \mB$ and $\mB \subseteq \mA$ hold.
The \emph{disjoint union} $\mA \uplus \mB$ of $\mA$ and $\mB$ is the function $A\cup B\to\N$ defined by
\[
\mA \uplus \mB (s):=
\begin{cases}
\mA(s)+\mB(s) & \text{if $s\in A$ and $s\in B$},\\
\mA(s) & \text{if $s\in A$ and $s\not\in B$},\\
\mB(s) & \text{if $s\not\in A$ and $s\in B$}.
\end{cases}
\]

We write $\lvert \mA \rvert$ to denote the size of  $\mA$, i.e., $\lvert \mA \rvert := \sum_{a\in A} \mA(a)$.
Let $\mX$ be a multiteam, $A$ a finite set, and $F\colon \cset{\mX}\to \Po(A)\setminus \emptyset$ a function. We denote by $\mX[A/x]$ the modified multiteam defined as 
\[
\biguplus_{s\in X}\biguplus_{a\in A} \{\big(s(a/x), \mX(s)\big)\}.
\]
By $\mX[F/x]$ we denote the multiteam defined as
\[
\biguplus_{s\in X}\biguplus_{1\leq i \leq \mX(s)} \{\big(s(b/x), 1\big) \mid b\in F\big((s,i)\big)\}.
\]

A multiteam $\mX$ \emph{over} $\A$ is a multiteam with codomain $A$. 
We are now ready to define multiteam semantics for first-order logic. In the semantical clauses below, we use the lax semantics for existential quantifier and strict semantics for disjunction as defined by Durand~et.~al~\cite{foiks/0001HKMV16}.

\begin{definition}[Multiteam semantics]\label{laxmulti}
Let $\A$ be a 
 $\tau$-structure and $\mX$ a multiteam over $\A$. The satisfaction relation $\models_{\mX}$ is defined as follows:
\begin{tabbing}
left \= $\A \models_{\mX} (\psi \land \theta)$ \= $\Leftrightarrow$ \= $\forall s\in X: \text{ if } \mX(s)\geq 1 \text{ then }  s(\tuple x) \not\in R^{\A}$\kill
\> $\A \models_{\mX} x=y$ \> $\Leftrightarrow$ \> $\text{for all } s\in X: \text{ if } \mX(s)\geq 1 \text{ then }  s(x)=s(y)$\\ 
\> $\A \models_{\mX} x\neq y$ \> $\Leftrightarrow$ \> $\text{for all } s\in X: \text{ if } \mX(s)\geq 1 \text{ then }  s(x)\not=s(y)$\\
\> $\A \models_{\mX} R(\tuple x)$ \> $\Leftrightarrow$ \> $\text{for all } s\in X: \text{ if } \mX(s)\geq 1 \text{ then }  s(\tuple x) \in R^{\A}$\\ 
\> $\A \models_{\mX} \neg R(\tuple x)$ \> $\Leftrightarrow$ \> $\text{for all } s\in X: \text{ if } \mX(s)\geq 1 \text{ then }  s(\tuple x) \not\in R^{\A}$\\
\> $\A\models_{\mX} (\psi \land \theta)$ \> $\Leftrightarrow$ \> $\A \models_{\mX} \psi \text{ and } \A \models_{\mX} \theta$\\
\> $\A\models_{\mX} (\psi \lor \theta)$ \> $\Leftrightarrow $\> $\A\models_{\mY} \psi  \text{ and } \A \models_{\mZ} \theta \text{ for some multisets}$
$\text{$\mY,\mZ\subseteq \mX$ s.t. $\mX = \mY\uplus\mZ$.}$\\
\> $\A\models_{\mX} \forall x\psi$ \> $\Leftrightarrow$ \> $\A\models_{\mX[A/x]} \psi$\\
\> $\A\models_{\mX} \exists x\psi$ \> $\Leftrightarrow$ \> $\A\models_{\mX[F/x]} \psi \text{ holds for some function}$
 $F\colon \cset{\mX} \to \Po(A)\setminus \emptyset$.
\end{tabbing}
\end{definition}

Using the counting measure,  a multiteam $\mX$ can be seen as a probability distribution over $X$; let $p_{\mX}$ denote the distribution defined as follows:
\[
p_{\mX} (s) := \frac{\mX(s)}{\sum_{t\in X} \mX(t)}.
\]

Conversely, every probability distribution $p$ over a team $X$ can be seen as a class $\mathcal{C}(p)$ of multiteams with that distribution as its counting measure:
\[
\mathcal{C}(p) := \{ \mX \mid p_{\mX}=p \}.
\]

Teams in $\mathcal{C}(p)$ can be seen as discrete approximations of the probability distribution $p$. In the section below we abandon the discrete approach and device team based logics that take probability distributions of teams as primitive. Intuitively, the semantics of these probabilistic logics is defined such that satisfaction of formulae with respect to probabilistic teams and their \emph{large enough} discrete approximations coincide.

\subsection{Probabilistic teams}

Let $D$ be a finite set of variables, $A$ a finite set, and $X$ a finite set of  assignments from $D$ to $A$. A $\emph{probabilistic team}$ $\X$ is a distribution $\X\colon X\rightarrow \rat$. We call $D$ and $A$ the variable domain and value domain of $\X$, respectively.
Let $\A$ be a $\tau$-structure and $\X$ a probabilistic team such that the domain of $\A$ is the value domain of $\X$.  Then we say that $\X$ is a probabilistic team of $\A$.
In the following, we will define two notations $\X[A/x]$ and $\X[F/x]$, similar to $\mX[A/x]$ and $\mX[F/x]$ of the previous section, in order to define the semantics of the universal and existential quantification of variables. 
Their intuition is depicted in Figure~\ref{fig:uni-exis}.
\begin{figure}[t]
	\centering
	\begin{tikzpicture}
		\foreach \x/\y in {0/0,0/1,0/2,2/0,2/1,2/2}{
			\draw[black,thick] (\x,\y) rectangle (\x+2,\y+1);
		}
		\foreach \y in {0.1,0.2,...,2.9}{
			\draw[black] (2,\y) -- (4,\y);
		}
		\foreach \y in {0,1,2}{
			\node at (1,\y+.5) {$s_\y$};
		}
		
		\node at (3,3.3) {$s_i(a/x)$};
		
		\draw [decorate,decoration={brace,amplitude=4pt,mirror},xshift=-4pt,yshift=0pt]
			(4.3,0.02) -- (4.3,0.98)node [black,midway,anchor=west,xshift=3pt] {$A\rightarrow\{\frac{1}{|A|}$\}};
		\draw [decorate,decoration={brace,amplitude=4pt,mirror},xshift=-4pt,yshift=0pt]
			(4.3,1.02) -- (4.3,1.98)node [black,midway,anchor=west,xshift=3pt] {$A\rightarrow\{\frac{1}{|A|}$\}};
		\draw [decorate,decoration={brace,amplitude=4pt,mirror},xshift=-4pt,yshift=0pt]
			(4.3,2.02) -- (4.3,2.98)node [black,midway,anchor=west,xshift=3pt] {$A\rightarrow\{\frac{1}{|A|}$\}};
	\end{tikzpicture}
	\qquad
	\begin{tikzpicture}
		\foreach \x/\y in {0/0,0/1,0/2,2/0,2/1,2/2}{
			\draw[black,thick] (\x,\y) rectangle (\x+2,\y+1);
		}
		\foreach \t in {.1,.3,.4,.7,1.2,1.5,1.8,1.9,2.09,2.4,2.55,2.9}{
			\draw[black] (2,\t) -- (4,\t); 
		}
		\foreach \y in {0,1,2}{
			\node at (1,\y+.5) {$s_\y$};
		}
		
		\node at (3,3.3) {$s_i(a/x)$};
		\draw [decorate,decoration={brace,amplitude=4pt,mirror},xshift=-4pt,yshift=0pt]
			(4.3,0.02) -- (4.3,0.98)node [black,midway,xshift=3pt,anchor=west] {$F(s_0)$};
		\draw [decorate,decoration={brace,amplitude=4pt,mirror},xshift=-4pt,yshift=0pt]
			(4.3,1.02) -- (4.3,1.98)node [black,midway,xshift=3pt,anchor=west] {$F(s_1)$};
		\draw [decorate,decoration={brace,amplitude=4pt,mirror},xshift=-4pt,yshift=0pt]
			(4.3,2.02) -- (4.3,2.98)node [black,midway,xshift=3pt,anchor=west] {$F(s_2)$};

	\end{tikzpicture}
	\caption{Intuition of universal quantification of $x$ (i.e., the set $\X[A/x]$) is depicted on the left side. The intuition of existential quantification of $x$ (i.e., the set $\X[F/x]$) is depicted of the right side. The height of a box labelled by an assignment corresponds to the assignments probability. E.g., on left the probability of $s_0$ is $\frac{1}{3}$ whereas the probability of $s_0(a/x)$ (for any $a\in A$) is $\frac{1}{3\lvert A\rvert}$.}\label{fig:uni-exis}
\end{figure}
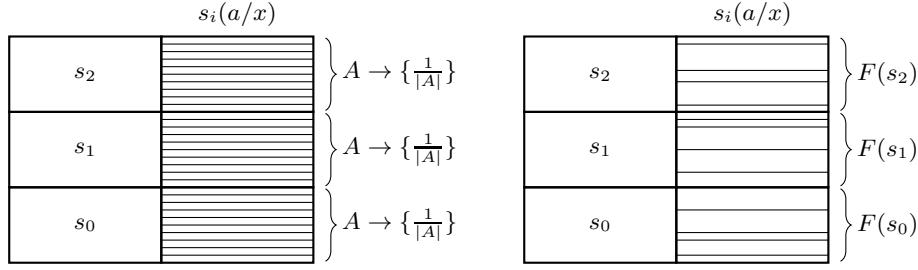

Let $\X\colon X\to\rat$ be a probabilistic team, $A$ a finite set, $p_A$ the set of all probability distributions  $d\colon A \to \rat$, and $F\colon X\to p_A$ a function.
We denote by $\X[A/x]$ the probabilistic team $X[A/x] \to \rat$  such that
\[
\X[A/x](s(a/x)) =  \sum_{\substack{t\in X \\ t(a/x) = s(a/x)}} \X(t) \cdot \frac{1}{\lvert A \rvert},
\]
for each $a\in A$ and $s\in X$. Note that if $x$ is a fresh variable then the righthand side of the above equation is simply $\X(s) \cdot \frac{1}{\lvert A \rvert}$.  By $\X[F/x]$ we denote the probabilistic team $X[A/x] \to \rat$ defined such that
\[
\X[F/x](s(a/x)) =  \sum_{\substack{t\in X \\ t(a/x) = s(a/x)}} \X(t) \cdot F(t)(a),
\]
for each $a\in A$ and $s\in X$. Again, if $x$ is a fresh variable, $\sum$ can be dropped from the above equation.

Let $\X\colon X\to \rat$ and $\Y\colon Y \to \rat$ be probabilistic teams with common variable and value domains, and let $k\in \rat$ be a rational number. We denote by $\X\kcup \Y$ the $k$-scaled union of $\X$ and $\Y$, that is, the probabilistic team $\X\kcup \Y\colon X\cup Y\to \rat$ defined such that for each $s\in X\cup Y$,
\[
(\X\kcup \Y)(s) :=
\begin{cases}
k\cdot \X(s) + (1-k)\cdot \Y(s) & \text{ if $s\in X$ and $s\in Y$},\\
k\cdot \X(s) & \text{ if $s\in X$ and $s\notin Y$},\\
(1-k)\cdot \Y(s) & \text{ if $s\in Y$ and $s\notin X$}.
\end{cases}
\]
We may now define probabilistic team semantics for first-order formulae.

\begin{definition}
Let $\A$ be a probabilistic $\tau$-structure over a finite domain $A$, and $\X\colon X \to \rat$ a proba\-bi\-listic team of $\A$. The satisfaction relation $\models_\X$ for first-order logic is defined as follows:

\begin{tabbing}
left \= $\A \models_{\X} (\psi \land \theta)$ \= $\Leftrightarrow$ \= $\forall s\in X: s(\tuple x) \in R^{\A}$\kill
\> $\A \models_{\X} x=y$ \> $\Leftrightarrow$ \> $\text{for all } s\in X: \text{if }\X(s)> 0 \text{, then } s(x)=s(y)$\\ 
\> $\A \models_{\X} x\neq y$ \> $\Leftrightarrow$ \> $\text{for all } s\in X: \text{if }\X(s)> 0 \text{, then } s(x) \not= s(y)$\\
\> $\A \models_{\X} R(\tuple x)$ \> $\Leftrightarrow$ \> $\text{for all } s\in X: \text{if }\X(s)> 0 \text{, then } s(\tuple x) \in R^{\A}$\\ 
\> $\A \models_{\X} \neg R(\tuple x)$ \> $\Leftrightarrow$ \> $\text{for all } s\in X: \text{if } \X(s)> 0 \text{, then } s(\tuple x) \not\in R^{\A}$\\
\> $\A\models_{\X} (\psi \land \theta)$ \> $\Leftrightarrow$ \> $\A \models_{\X} \psi \text{ and } \A \models_{\X} \theta$\\
\> $\A\models_{\X} (\psi \lor \theta)$ \> $\Leftrightarrow$ \> $\A\models_\Y \psi \text{ and } \A \models_\Z \theta \text{ for some  $\Y,\Z,k$}$ s.t.\ $\Y\kcup \Z = \X$\\
\> $\A\models_{\X} \forall x\psi$ \> $\Leftrightarrow$ \> $\A\models_{\X[A/x]} \psi$\\
\> $\A\models_{\X} \exists x\psi$ \> $\Leftrightarrow$ \> $\A\models_{\X[F/x]} \psi \text{ holds for some } F\colon X \to p_A $.
\end{tabbing}
\end{definition}

Next we define the semantics of probabilistic atoms considered in this paper: marginal identity and probabilistic independence atom. They were first introduced in the context of multiteam  semantics in \cite{foiks/0001HKMV16}. We define $|\X_{\tuple x =\tuple a}|$ where $\tuple x$ is a tuple of variables and $\tuple a$ a tuple of values, as the rational
\[|\X_{\tuple x =\tuple a}|:=\sum_{\substack{s(\tuple x)=\tuple a\\s\in X}} \X(s).\]
If $\phi$ is some first-order formula, then $|\X_{\phi}|$ is defined analogously as the total sum of weights of those assignments in $X$ that satisfy $\phi$.

If $\tuple x,\tuple y$ are variable sequences of length $k$, then $\tuple x \approx \tuple y$ is a \emph{marginal identity atom} with the following semantics:

\begin{equation}\label{eq:marg}
\A \models_{\X} \vec{x} \approx \vec{y} \Leftrightarrow \lvert {\X}_{\vec{x}=\vec{a}} \rvert =  \lvert {\X}_{\vec{y}=\vec{a}} \rvert \text{ for each $\vec{a}\in A^k$}
\end{equation}
Note that the equality  $\vert{\X}_{\vec{x}=\vec{a}} \rvert =  \lvert {\X}_{\vec{y}=\vec{a}} \rvert$ in \eqref{eq:marg} can be equivalently replaced with $\lvert{\X}_{\vec{x}=\vec{a}} \rvert 
\leq  \lvert {\X}_{\vec{y}=\vec{a}}\rvert$  since the tuples $\tuple a$ range over $A^k$. Due to this alternative formulation, marginal identity atoms were in \cite{foiks/0001HKMV16} called probabilistic inclusion~atoms.

If $\tuple x,\tuple y,\tuple z$ are variable sequences, then $\pcixyz$ is a \emph{probabilistic conditional independence atom} with the satisfaction relation defined as
\begin{align}
	\A\models_{\X} \pcixyz \label{def1}
\end{align}
if for all $s\colon\Var(\tuple x\tuple y\tuple z) \to A$ it holds that
\[
\lvert {\X}_{\tuple x\tuple y= s(\tuple x\tuple y)} \rvert \cdot  \lvert {\X}_{\tuple x\tuple z=s(\tuple x\tuple z)} \rvert = \lvert {\X}_{\tuple x\tuple y\tuple z=s(\tuple x\tuple y\tuple z)} \rvert \cdot \lvert {\X}_{\tuple x=s(\tuple x)} \rvert .
\]

The logic $\FOprob$ is now defined as the extension of $\FO$ with marginal identity and probabilistic conditional independence atoms. The following two examples demonstrate the expressivity of $\FOprob$.
\begin{example}\label{ex1}
The formula
\(
\forall \vec{y} \vec{x}\approx\vec{y}
\)
states that the probabilities for $\tuple x$ are uniformly distributed over all value sequences of length $|\tuple x|$.
\end{example}
\begin{example}\label{ex2}
We define a formula $\phi(x) \dfn \exists \alpha\beta\psi(x,\alpha,\beta)$ which expresses that the weight of a predicate $P(x)$ is at least two times that of a predicate $Q(x)$ in a probabilistic team over $x$. The subformula $\psi$ in $\phi$ is given as
\begin{align}
\psi &\dfn  x\alpha \approx x\beta\wedge \alpha=0\leftrightarrow \beta\neq 0\wedge
 \exists \gamma_{P}\gamma_Q \theta(x,\alpha,\beta,\gamma_{P}\gamma_Q), \text{ where} \label{eq:yks}\\
 \theta &\dfn \big((P(x)\wedge \alpha=0)\leftrightarrow \gamma_P=0\big)\wedge
Q(x)\rightarrow \gamma_Q=0 \wedge \gamma_P\approx \gamma_Q \label{eq:kaks}
\end{align}
 Now $\A\models_{\X}\phi(x)\iff |\X_{P(x)}|\geq 2\cdot |\X_{Q(x)}|$ for any $\X\colon X\to \rat$ where $\alpha$, $\beta$, $\gamma_{P}$, and $\gamma_Q$ are not in the variable domain of $\X$. The first two conjuncts in \eqref{eq:yks} indicate that the values of $\alpha$ must be chosen so that $\frac{1}{2}\cdot  |\Y_{P(x)}|=|\Y_{P(x)\wedge \alpha=0}|$. Where $\Y$ denotes the team obtained form $\X$ by evaluating the quantifiers $\exists \alpha\beta$. The first conjunct in \eqref{eq:kaks} implies  that  $|\Z_{P(x)\wedge \alpha=0}|=|\Z_{\gamma_P=0}|$ and the second that $|\Z_{Q(x)}|\leq |\Z_{\gamma_Q=0}|$, where $\Z$ is team obtained from $\Y$ by evaluating the quantifiers  $\exists \gamma_{P}\gamma_Q$. The third conjunct in \eqref{eq:kaks} then indicates that $|\Z_{\gamma_P=0}|=\lvert\Z_{\gamma_Q=0}|$. Put together, we have that 
 \begin{multline*}
 \lvert \X_{Q(x)}\rvert \overset{*}{=} |\Z_{Q(x)}|\leq |\Z_{\gamma_Q=0}|=|\Z_{\gamma_P=0}|=|\Z_{P(x)\wedge \alpha=0}|
 \overset{*}{=}|\Y_{P(x)\wedge \alpha=0}|=\frac{1}{2} |\Y_{P(x)}|\overset{*}{=}\frac{1}{2} |\X_{P(x)}|.
\end{multline*}
The equations $\overset{*}{=}$ follow from the fact that quantification of  fresh variables do not change the distribution of assignments with respect to the old variables.

\end{example}
Our next example relates probabilistic conditional independence atoms and marginal identity atoms to Bayesian networks.
A Bayesian network is a directed acyclic graph whose nodes represent random variables and edges represent dependency relations between these random variables. The applicability of Bayesian networks is grounded in the notion of conditional independence as the conditional independence relations encoded in the topology of such a network enable a factorization of the underlying joint probability distribution. Next we survey the possibility of refining Bayesian networks with information obtained from $\FOprob$ formulae.
\begin{example}\label{ex3}
Consider the Bayesian network $\mathbb{G}$ in Fig. \ref{network} that models beliefs about house safety using four Boolean random variables. We note that the awakening of  \texttt{guard} or \texttt{alarm} is conditioned upon both the presence of \texttt{thief} and \texttt{cat}. Furthermore, \texttt{cat} depends on \texttt{thief}, and \texttt{guard} and \texttt{alarm} are independent given \texttt{thief} and \texttt{cat}. From the network we obtain that the joint probability distribution for these variables can be factorized as 
 \begin{equation}\label{exeq}
 P(t,c,g,a)=P(t)\cdot P(c\mid t)\cdot P(g\mid t,c)\cdot P(a\mid t,c)
 \end{equation}
  where, e.g., $t$ abbreviates either $\texttt{thief}=T$ or $\texttt{thief}=F$, and $P(c\mid t)$ is the probability of $c$ given $t$. The joint probability distribution (i.e., a team $\X$) can hence be stored as in Fig. \ref{network}.

 Let $t,c,g,a$ now refer to random variables $\texttt{thief},\texttt{cat},\texttt{guard},\texttt{alarm}$. 
 The dependence structure of a Bayesian network is characterized by the so-called local directed Markov property  stating  that each variable  is conditionally independent of its non-descendants given its parents. For our network $\mathbb{G}$ the only non-trivial independence given by this property is $ \pci{tc}{g}{a}$. Hence a probabilistic team $\X$ over  $t,c,g,a$ factorizes according to \eqref{exeq} iff $\X$ satisfies $ \pci{tc}{g}{a}$. In this situation knowledge on various $\FOprob$ formulae can further improve the decomposition of the joint probability distribution. 
 Assume we have information suggesting that we may safely assume an $\FOprob$ formula $\phi$ on  $\X$:
\begin{itemize}
\item $\phi := t=F\to g=F$ indicates that \texttt{guard} never raises alert in absence of \texttt{thief}. In this case the two bottom rows of the conditional probability distribution for \texttt{guard} become superfluous.
\item $\phi := tca\approx tcg$ indicates that \texttt{alarm} and \texttt{guard} have the same reliability for any given value of \texttt{thief} and \texttt{cat}. Consequently, the conditional distributions for \texttt{alarm} and \texttt{guard} are equal and one of the them can be removed.
\item $\phi :=  \exists x(tcg\approx tcx \wedge \pmi{tcga}{y}\wedge x=T\leftrightarrow ay=TT)$ entails that \texttt{guard} is of a factor $P(y=T)$ less sensitive to raise alert than \texttt{alarm} for any given $\texttt{thief}$ and $\texttt{cat}$. The formula introduces a fresh free variable $y$, independent of any random variable in $\mathbb{G}$, and such that  that the probability of $ay=TT$ equals the probability of $g=T$ given $tc$. The latter property is expressed by introducing an auxiliary distribution for  $x$. In this case it suffices to store the conditional probability table for \texttt{alarm} and the probability $P(y=T)$.
\end{itemize}

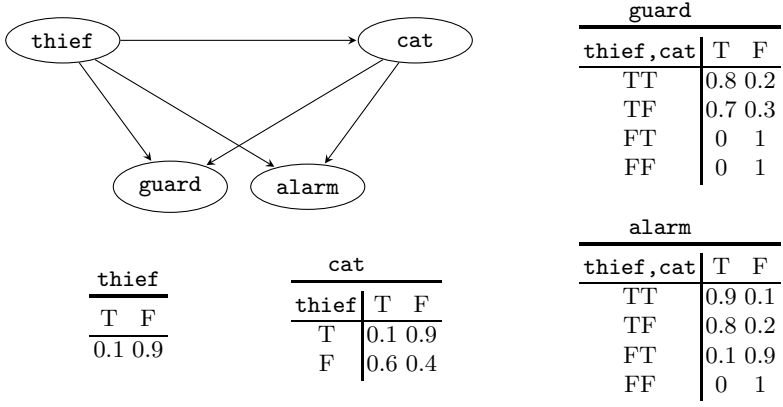
\begin{figure}[t]
\begin{tabular}{ccc}
\begin{tabular}{cc}
\multicolumn{2}{c}{
\begin{tikzpicture}[->,>=stealth, shorten >=.5pt,auto,node distance=2cm,
 main node/.style={circle,draw,ellipse,minimum size=0.6cm,minimum width=15mm,draw,font=\sffamily\bfseries}]
\usetikzlibrary{arrows}
\usetikzlibrary{shapes}
  \node[main node]  (1)  {\texttt{thief}};
 \node[main node, right=75pt]   (2) [right of=1] {\texttt{cat}};
   \node[main node, below=15pt]   (3) [below right  of=1] {\texttt{guard}};
   \node[main node, below=15pt]   (4) [below  left of=2] {\texttt{alarm}};
\path[every node/.style={sloped,anchor=south,auto=false}]
    (1) edge (2)
    (1) edge (3)
    (1) edge (4)
    (2) edge (4)
    (2) edge (3);
\end{tikzpicture}
}\\
{}\\
\mbox{\hspace{1cm}
\begin{tabular}{cc}
\multicolumn{2}{c}{\texttt{thief}}\\\toprule
T&F\\\hline
$0.1$&$0.9$
\end{tabular}}
&
\mbox{\hspace{1cm}
\begin{tabular}{c|cc}
\multicolumn{2}{c}{\texttt{cat}}\\\toprule
\texttt{thief}&T&F\\\hline
T&$0.1$&$0.9$\\
F&$0.6$&$0.4$
\end{tabular}}
\end{tabular}
&
\hspace{1cm}
&
\begin{tabular}{c}
\mbox{
\begin{tabular}{c|cc}
\multicolumn{2}{c}{\texttt{guard}}\\\toprule
\texttt{thief,cat}&T&F\\\hline
TT&$0.8$&$0.2$\\
TF&$0.7$&$0.3$\\
FT&$0$&$1$\\
FF&$0$&$1$\\
\end{tabular}}\\

{}\\

\mbox{
\begin{tabular}{c|cc}
\multicolumn{2}{c}{\texttt{alarm}}\\\toprule
\texttt{thief,cat}&T&F\\\hline
TT&$0.9$&$0.1$\\
TF&$0.8$&$0.2$\\
FT&$0.1$&$0.9$\\
FF&$0$&$1$\\
\end{tabular}}\\
\end{tabular}

\end{tabular}
\caption{Bayesian network $\mathbb{G}$ and its related conditional distributions\label{network}
}
\end{figure}
\end{example}

Next we connect probabilistic teams to multiteams. 
Denote by $\Prob$ the mapping that transforms a multiteam to its corresponding probabilistic team, i.e., given a multiteam $\mX$, $\Prob(\mX)$ is the probabilistic team $\X\colon X\to \rat$ such that
\[\X(s)=\frac{\mX(s)}{\sum_{s'\in X} \mX(s')}.\]

It follows from the definitions that $\Prob$ preserves the truth condition for marginal identity and probabilistic independence atoms.
\begin{proposition}\label{prop:equi}
Let $\phi$ be a marginal identity or a probabilistic independence atom, let $\mX$ be a multiteam of a structure $\A$, and let $\X$ be a probabilistic team of $\A$ such that $\X=\Prob(\mX)$.  Then 
\(\A\models_{\mX} \phi \iff \A \models_{\X} \phi\).
\end{proposition}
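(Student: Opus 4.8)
The plan is to reduce both cases to one elementary scaling identity and then cancel a common factor. Write $N := \sum_{s\in X}\mX(s)$ for the size of $\mX$; since $\X=\Prob(\mX)$ is assumed to be a well-defined distribution we have $N>0$ (if $\mX$ were empty all counts below vanish and both atoms hold trivially on either side). Recall that the multiteam semantics of $\tuple x\approx\tuple y$ and $\pcixyz$ from \cite{foiks/0001HKMV16} is phrased through the multiplicity counts $\lvert\mX_{\tuple x=\tuple a}\rvert := \sum_{s\in X,\,s(\tuple x)=\tuple a}\mX(s)$, in exact parallel with the probabilistic counts $\lvert\X_{\tuple x=\tuple a}\rvert$. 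The first step is to record, for every tuple of variables $\tuple x$ and every value tuple $\tuple a$, the identity
\[
\lvert\X_{\tuple x=\tuple a}\rvert=\sum_{\substack{s\in X\\ s(\tuple x)=\tuple a}}\X(s)=\sum_{\substack{s\in X\\ s(\tuple x)=\tuple a}}\frac{\mX(s)}{N}=\frac{1}{N}\,\lvert\mX_{\tuple x=\tuple a}\rvert,
\]
which follows by unfolding the definition of $\Prob$ and pulling the constant $1/N$ out of the sum.

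For the marginal identity atom the argument is then immediate. Substituting the identity into the defining condition $\lvert\X_{\tuple x=\tuple a}\rvert=\lvert\X_{\tuple y=\tuple a}\rvert$ turns it into $\tfrac{1}{N}\lvert\mX_{\tuple x=\tuple a}\rvert=\tfrac{1}{N}\lvert\mX_{\tuple y=\tuple a}\rvert$; since $N\neq0$, this holds for all $\tuple a\in A^k$ exactly when $\lvert\mX_{\tuple x=\tuple a}\rvert=\lvert\mX_{\tuple y=\tuple a}\rvert$ for all $\tuple a$, which is the multiteam condition. The point is that the atom is \emph{linear} in the counts, so $1/N$ occurs once on each side and cancels.

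For the conditional independence atom $\pcixyz$ the same cancellation works once one tracks the degree of homogeneity. Substituting the identity into the defining equation
\[
\lvert\X_{\tuple x\tuple y=s(\tuple x\tuple y)}\rvert\cdot\lvert\X_{\tuple x\tuple z=s(\tuple x\tuple z)}\rvert=\lvert\X_{\tuple x\tuple y\tuple z=s(\tuple x\tuple y\tuple z)}\rvert\cdot\lvert\X_{\tuple x=s(\tuple x)}\rvert
\]
introduces the factor $1/N^2$ on both sides, since each side is a \emph{product of two} counts. Dividing through by $1/N^2$ (again legitimate because $N\neq0$) yields precisely the multiteam independence condition, and the converse direction multiplies back by $1/N^2$. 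I do not expect a genuine obstacle here; the only point needing care is to verify that the two sides of each atom are homogeneous of the same degree in the counts — degree one for $\approx$ and degree two for $\pcixyz$ — so that the powers of $1/N$ match and cancel. Granting $N>0$, both equivalences follow.
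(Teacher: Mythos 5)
Your argument is correct and is exactly the computation the paper has in mind: the paper gives no explicit proof, merely asserting that the claim ``follows from the definitions,'' and the intended justification is precisely your observation that every count scales by $1/N$ under $\Prob$ and that both sides of each atom are homogeneous of the same degree in the counts (one for $\approx$, two for the independence atom), so the factor cancels. Nothing further is needed.
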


The restriction of a team $X$ to $V$ is defined as $X\upharpoonright V=\{s\upharpoonright V\mid s\in X\}$ where $s\upharpoonright V$ denotes the restriction of the assignment $s$ to $V$. The restriction of a probabilistic team $\X$ to $V$ is then defined as the probabilistic team $Q\colon X\upharpoonright V \to \rat$ where 
\[Q(s)= \sum_{s'\upharpoonright V= s} P(s').\] 
The following locality property  indicates that satisfaction of  $\phi \in \FOprob$  is determined by the restriction of a probabilistic team to the free variables of $\phi$. The set of \emph{free variables} $\Fr(\phi)$ of a formula $\phi\in \FOprob$ is defined recursively as in first-order logic with the addition that for probabilistic independence and marginal identity atoms $\phi$,  $\Fr(\phi)$ consists of all variables that appear in $\phi$.
\begin{proposition}[Locality]\label{prop:locality}
Let $\phi(\tuple x)\in \FOprob$ be a formula with free variables from $\tuple x=(x_1, \ldots ,x_n)$. Then for all structures $\A$ and probabilistic teams $\X:X\to \rat$ where $ \{x_1, \ldots ,x_n\} \subseteq V\subseteq\Dom(X)$,
\(\A\models_{\X} \phi \iff \A\models_{\X\upharpoonright V}\phi.\)
\end{proposition}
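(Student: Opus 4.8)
The plan is to prove the equivalence by structural induction on $\phi$, with the two nontrivial ingredients isolated as commutation facts describing how marginalisation $\X\mapsto\X\upharpoonright V$ interacts with the probabilistic team operations. Concretely, before the induction I would record three facts, all flowing from the linearity of the marginal sums $(\X\upharpoonright V)(s)=\sum_{t\upharpoonright V=s}\X(t)$: (i) marginalisation is linear over the scaled union, i.e.\ if $\X=\Y\kcup\Z$ then $\X\upharpoonright V=(\Y\upharpoonright V)\kcup(\Z\upharpoonright V)$ for the same $k$; (ii) $\X[A/x]\upharpoonright(V\cup\{x\})=(\X\upharpoonright V)[A/x]$; and (iii) for every $F\colon X\to p_A$ there is $F'\colon X\upharpoonright V\to p_A$ with $\X[F/x]\upharpoonright(V\cup\{x\})=(\X\upharpoonright V)[F'/x]$, and conversely every $F'$ lifts to some such $F$. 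For (iii) the averaging witness is $F'(s')(a):=\tfrac{1}{(\X\upharpoonright V)(s')}\sum_{t\upharpoonright V=s'}\X(t)\,F(t)(a)$, which is a distribution since each $F(t)$ is, while the lift of a given $F'$ is simply $F(t):=F'(t\upharpoonright V)$; both identities reduce to regrouping the defining sums.

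For the base cases I would argue that each atom only sees the free variables, all of which lie in $V$. For the first-order literals the point is that $(\X\upharpoonright V)(s')>0$ iff some $t$ with $t\upharpoonright V=s'$ has $\X(t)>0$, and any such $t$ agrees with $s'$ on the variables of the literal; hence the ``if the weight is positive then \ldots'' condition transfers verbatim. For a marginal identity atom $\tuple x\approx\tuple y$ and a probabilistic independence atom $\pcixyz$, every quantity occurring in the semantics is a marginal weight $\lvert\X_{\tuple u=\tuple a}\rvert$ with $\Var(\tuple u)\subseteq V$, and such weights are invariant under restriction: grouping the assignments of $X$ by their $V$-part gives $\lvert\X_{\tuple u=\tuple a}\rvert=\lvert(\X\upharpoonright V)_{\tuple u=\tuple a}\rvert$. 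Thus the defining (in)equalities hold for $\X$ iff they hold for $\X\upharpoonright V$.

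The connective cases then follow. Conjunction is immediate from the induction hypothesis, since $\Fr(\psi),\Fr(\theta)\subseteq\Fr(\phi)\subseteq V$. For disjunction, the forward direction takes a witnessing split $\X=\Y\kcup\Z$ with $\A\models_\Y\psi$ and $\A\models_\Z\theta$, restricts it to $V$ using fact (i), and applies the induction hypothesis to $\Y$ and $\Z$. The converse is the delicate step: given $\X\upharpoonright V=\Y'\kcup\Z'$ realising the disjunction on the restricted team, I would lift the split back to $X$ by distributing the weight of each $t\in X$ in proportion to how the split treats $t\upharpoonright V$, namely $\Y(t):=\Y'(t\upharpoonright V)\cdot\tfrac{\X(t)}{(\X\upharpoonright V)(t\upharpoonright V)}$ and symmetrically for $\Z$. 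A direct computation then gives $\Y\upharpoonright V=\Y'$, $\Z\upharpoonright V=\Z'$, and $\X=\Y\kcup\Z$ with the same $k$, after which the induction hypothesis upgrades satisfaction from $\Y',\Z'$ to $\Y,\Z$.

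Finally, for the quantifiers I would apply the induction hypothesis with the enlarged set $V\cup\{x\}$ (legal since $\Fr(\psi)\subseteq\Fr(\phi)\cup\{x\}\subseteq V\cup\{x\}\subseteq\Dom(X)\cup\{x\}$) and then transport the quantifier step across the restriction using facts (ii) and (iii): for $\forall x$ the equality $\X[A/x]\upharpoonright(V\cup\{x\})=(\X\upharpoonright V)[A/x]$ converts a witness on one side into a witness on the other, and for $\exists x$ the correspondence between $F$ and $F'$ does the same for the quantifier witness. I expect the main obstacle to be the disjunction, and the analogous existential, lifting: one must check that the proportional redistribution yields genuine probability distributions, that the domains of $\Y,\Z$ match those required by $\kcup$, and that the \emph{same} scaling constant $k$ is reused; these are exactly the places where the weights must be handled with care, whereas the rest of the argument is routine bookkeeping with finite sums.
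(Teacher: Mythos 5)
Your proof is correct and follows essentially the same route as the paper: structural induction on $\phi$, with the commutation identity $(\X\kcup\Y)\upharpoonright V=(\X\upharpoonright V)\kcup(\Y\upharpoonright V)$ driving the disjunction case and analogous identities handling the quantifiers, while the atoms are dispatched by noting that all relevant marginal weights are preserved under restriction. The only difference is one of detail: the paper dismisses the converse direction of the disjunction and the quantifier cases as ``analogous''/``similar'', whereas you correctly identify that the converse requires lifting a split of $\X\upharpoonright V$ back to a split of $\X$ and supply the explicit proportional-redistribution witness, which is a sound and worthwhile elaboration of what the paper leaves implicit.
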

\begin{proof} For first-order atoms the claim is immediate. Furthermore, it is easy to check that the same holds for the atoms  $\vec{x} \approx \vec{y}$ and  $\pcixyz$ (for  multiteam semantics this has been discussed in \cite{foiks/0001HKMV16}). 

Assume then that $\phi:= \psi\vee \theta$, and that the claim holds for $\psi$ and $\theta$. Note first that for any  probabilistic teams $\X$ and    $\Y$ with common variable and value domains a simple calculation shows that  
\begin{equation}\label{localitydis}
 (\X\kcup \Y)\upharpoonright V = \X\upharpoonright V \kcup  \Y\upharpoonright V.     
 \end{equation}
Suppose that  $\A\models_{\X} \phi$.  Then there are $k$, $\Y$, and  $\Z$ such that $\X= \Y\kcup \Z$, $\A\models_{\Y} \psi$, and $\A\models_{\Z} \theta$. By the induction assumption, it holds that  $\A\models_{\Y \upharpoonright V} \psi$ and $\A\models_{\Z\upharpoonright V} \theta$. Now by \eqref{localitydis}, $\A\models_{\X\upharpoonright V} \phi$. The converse implication is proved analogously.  The proof is similar for the cases $\phi:= \exists x\psi$ and $\phi:= \forall x\psi$. \qed
\end{proof}

\section{Translation from $\FOprob$ to $\ESOf$}\label{sect:fromFOprob}
In this section, we show that any formula in $\FOprob$ can be equivalently expressed as a sentence of $\ESOf$ that has exactly one free function variable for encoding probabilistic teams. The following lemma will be used to facilitate the translation. This lemma has been shown by Durand~et~al.~\cite{foiks/0001HKMV16} for multiteams and accordingly, by Proposition~\ref{prop:equi}, it holds for probabilistic teams as well. The lemma entails that each probabilistic independence atom in $\phi\in \FOprob$ can be assumed to be either of the form $\pcixyz$ or of the form $\pci{\tuple x}{\tuple y}{\tuple y}$ for pairwise disjoint tuples $\tuple x,\tuple y,\tuple z$.
\begin{lemma}\cite{foiks/0001HKMV16}\label{rules}
Let $\A$ be a structure and $\X$ a probabilistic team over $\A$. Then
\begin{enumerate}[(i)]
\item $\A \models_{\X} \pci{\tuple x}{\tuple y}{\tuple z} \quad\Leftrightarrow\quad \A \models_{\X} \pcib{\tuple x}{\tuple y\setminus \tuple x}{\tuple z\setminus \tuple x}$, 
\item $\A \models_{\X} \pci{\tuple x}{\tuple y}{\tuple z} \quad\Leftrightarrow\quad \A \models_{\X} \pcib{\tuple x}{\tuple y\setminus \tuple z}{\tuple z\setminus \tuple y} \wedge \pcib{\tuple x}{\tuple y\cap\tuple z}{\tuple y\cap\tuple z}$.
\end{enumerate}
\end{lemma}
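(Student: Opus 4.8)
The plan is to reason directly from the defining equation of $\pci{\tuple x}{\tuple y}{\tuple z}$, rewriting everything in terms of the weights $\lvert\X_{\tuple u=s(\tuple u)}\rvert$. The key structural observation I will lean on is that this atom quantifies over a single assignment $s\colon\Var(\tuple x\tuple y\tuple z)\to A$; hence any variable shared between two of the tuples is automatically given one and the same value, so no ``inconsistency'' cases arise and I may merge or delete repeated variables inside a single weight term at will. I will also repeatedly use the monotonicity $\lvert\X_{\tuple u\tuple v=s(\tuple u\tuple v)}\rvert\le\lvert\X_{\tuple u=s(\tuple u)}\rvert$ and the fact that $\lvert\X_{\tuple u=s(\tuple u)}\rvert=0$ forces every weight whose constraint extends $\tuple u=s(\tuple u)$ to vanish as well.

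For (i) I would show that the two defining equations agree term by term. Because $\Var(\tuple x\tuple y)=\Var(\tuple x(\tuple y\setminus\tuple x))$, the constraint $\tuple x\tuple y=s(\tuple x\tuple y)$ selects exactly the same assignments as $\tuple x(\tuple y\setminus\tuple x)=s(\tuple x(\tuple y\setminus\tuple x))$, whence $\lvert\X_{\tuple x\tuple y=s(\tuple x\tuple y)}\rvert=\lvert\X_{\tuple x(\tuple y\setminus\tuple x)=s(\dots)}\rvert$, and similarly for the other three weights appearing in the equation. Since both atoms also quantify $s$ over the same set $\Var(\tuple x\tuple y\tuple z)$, their defining equations literally coincide for every $s$, and the equivalence follows at once.

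For (ii) put $\tuple w\dfn\tuple y\cap\tuple z$, $\tuple y'\dfn\tuple y\setminus\tuple z$ and $\tuple z'\dfn\tuple z\setminus\tuple y$. The conceptual core is that $\pci{\tuple x}{\tuple w}{\tuple w}$ expresses that $\tuple w$ is almost surely determined by $\tuple x$: its defining equation $\lvert\X_{\tuple x\tuple w=s}\rvert^2=\lvert\X_{\tuple x\tuple w=s}\rvert\cdot\lvert\X_{\tuple x=s}\rvert$ together with monotonicity forces $\lvert\X_{\tuple x\tuple w=s}\rvert\in\{0,\lvert\X_{\tuple x=s}\rvert\}$ for every $s$. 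From the original atom to the conjunction I proceed in two steps. First I obtain $\pci{\tuple x}{\tuple w}{\tuple w}$ by fixing the values of $\tuple x$ and $\tuple w$ and summing the original equation over all values of $\tuple y'$ and of $\tuple z'$: its left-hand product factorises, one factor depending only on $\tuple y'$ and the other only on $\tuple z'$, and marginalising each factor collapses it to $\lvert\X_{\tuple x\tuple w=\dots}\rvert$, yielding the required square identity. Second I obtain $\pci{\tuple x}{\tuple y'}{\tuple z'}$ by instantiating the original equation at an assignment whose $\tuple w$-value is the (unique, whenever $\lvert\X_{\tuple x=s}\rvert>0$) value forced by the functional dependence just established; under this choice every $\tuple w$-constraint is redundant, so the four weights become precisely the four weights of the $\pci{\tuple x}{\tuple y'}{\tuple z'}$ equation.

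For the converse direction I fix an assignment $s$ and split on the value of $\lvert\X_{\tuple x\tuple w=s}\rvert$. If it is $0$, every weight containing $\tuple w$ vanishes and the original equation reads $0=0$. Otherwise $\pci{\tuple x}{\tuple w}{\tuple w}$ gives $\lvert\X_{\tuple x\tuple w=s}\rvert=\lvert\X_{\tuple x=s}\rvert$, so the weight of assignments with $\tuple x=s(\tuple x)$ but $\tuple w\neq s(\tuple w)$ is $0$; hence the $\tuple w$-constraint may be dropped from each weight, and the original equation reduces to the instance of $\pci{\tuple x}{\tuple y'}{\tuple z'}$ at $s$, which holds by hypothesis. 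I expect the main difficulty to be purely the bookkeeping: tracking which weight carries which occurrence of $\tuple w$ and handling possible overlaps of $\tuple x$ with $\tuple y$ and $\tuple z$. I plan to neutralise the overlap problem by manipulating assignments $s$ throughout (so shared variables stay consistent by construction) rather than raw tuples of elements, and to isolate the zero-weight cases explicitly — these are exactly the situations the product (as opposed to quotient) formulation of the atom is designed to accommodate, and in each of them the identity is trivially true.
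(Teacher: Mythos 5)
Your argument is correct, but it takes a genuinely different route from the paper. The paper does not prove Lemma~\ref{rules} directly: it imports the corresponding statement for multiteams from Durand~et~al.~\cite{foiks/0001HKMV16} and transfers it to probabilistic teams via Proposition~\ref{prop:equi}, i.e., via the fact that the map $\Prob$ preserves satisfaction of probabilistic independence atoms. You instead give a self-contained semantic proof inside probabilistic team semantics: (i) by observing that the defining equations of the two atoms are literally identical term by term (same quantification domain for $s$, same sets of selected assignments for each weight), and (ii) by the standard decomposition argument — marginalising the defining equation over the values of $\tuple y\setminus\tuple z$ and $\tuple z\setminus\tuple y$ to extract the ``constancy'' atom $\pcib{\tuple x}{\tuple y\cap\tuple z}{\tuple y\cap\tuple z}$, then using the almost-sure functional dependence of $\tuple y\cap\tuple z$ on $\tuple x$ that this atom encodes to drop or reinstate the $\tuple y\cap\tuple z$-constraints in each weight, with the zero-weight cases handled separately by monotonicity. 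Both key steps check out (the factorisation of the summed left-hand side uses that $\Var(\tuple y\setminus\tuple z)$ and $\Var(\tuple z\setminus\tuple y)$ are disjoint, which holds by definition), and your insistence on working with assignments $s$ rather than raw value tuples correctly neutralises the overlaps of $\tuple x$ with $\tuple y$ and $\tuple z$. What your approach buys is independence from the multiteam machinery and from Proposition~\ref{prop:equi}; what the paper's approach buys is brevity and reuse of an already established result.
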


\begin{theorem}\label{thm:fromFOprob}
For every formula $\phi(\tuple x)\in \FOprob$ with free variables from $\tuple x=(x_1, \ldots ,x_n)$ there exists a formula $\phi^*(f)\in \ESOf$ with exactly one free function variable $f$ such that for all structures $\A$ and nonempty probabilistic teams $\X\colon X\to \rat$,
\[\A\models_{\X} \phi(\tuple x) \iff (\A,f_\X)\models \phi^*(f),
\]
where $f_\X:A^n\to \rat$ is the probability distribution such that $f_\X(s(\tuple x))=\X(s)$ for all  $s\in X$.
\end{theorem}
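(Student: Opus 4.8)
The plan is to define $\phi\mapsto\phi^*$ by induction on $\phi$, but first I would \emph{strengthen} the statement so that the induction goes through. Namely I would prove: for every $\psi\in\FOprob$ and every tuple $\tuple v=(v_1,\dots,v_m)$ of distinct variables with $\Fr(\psi)\subseteq\Var(\tuple v)$ there is an $\ESOf$-formula $\psi^*_{\tuple v}$ whose unique free function variable $f$ has arity $m$, such that for every structure $\A$ and every distribution $h\colon A^m\to\rat$,
\[
(\A,h)\models\psi^*_{\tuple v}(f)\iff \A\models_{\X_h}\psi,
\]
where $\X_h$ is the probabilistic team with variable domain $\Var(\tuple v)$ and $\X_h(s)=h(s(\tuple v))$. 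The theorem is the case $\tuple v=\tuple x$; the reduction of the given $\X$ to the relevant marginal $f_\X$ is justified by Locality (Proposition~\ref{prop:locality}). Carrying the tuple $\tuple v$ is what makes the binary connectives manageable: there both immediate subformulae are translated against the \emph{same} function slot (their free variables both lie in $\Var(\tuple v)$), while the quantifier cases pass to $\tuple v x$, after renaming the bound variable fresh, and hence to a function symbol of arity $m+1$.

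For the base cases I would read the translations off the defining equations. Every marginal $\lvert\X_{\tuple u=\tuple a}\rvert$ is the numerical $\SUM$-term obtained from $f$ by fixing the coordinates named by $\tuple u$ and summing out the remaining ones; hence $\tuple u\approx\tuple w$ becomes a universally quantified equality of two such $\SUM$-terms, and $\pci{\tuple u}{\tuple w}{\tuple z}$ becomes the corresponding universally quantified equality of \emph{products} of $\SUM$-terms, the products being available through $\times$ (Lemma~\ref{rules} lets me assume the three tuples are pairwise disjoint, though the marginals exist regardless). A first-order atom such as $R(\tuple x)$ holds in $\X_h$ exactly when $h$ is supported on the satisfying tuples, i.e.\ $\forall\tuple v\,(f(\tuple v)=0\vee R(\tuple x))$, where $0$ is the constant definable (for $\lvert A\rvert\ge2$) as recalled before the statement. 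Conjunction is immediate, $\psi^*_{\tuple v}(f)\wedge\theta^*_{\tuple v}(f)$, because $\wedge$ has the flat ``same team'' semantics.

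The quantifier cases I would settle through the marginal/product link between $f$ and the function encoding the quantified team. For $\forall x\psi$ one has $\X[A/x](s(a/x))=\X(s)\cdot\frac1{\lvert A\rvert}$, so I quantify a fresh arity-$(m+1)$ distribution $g$ and a unary distribution $u$ pinned to the uniform one by $\forall y\,y'\,(u(y)=u(y'))$ --- an all-equal distribution necessarily takes the value $\frac1{\lvert A\rvert}$ --- and assert $\forall\tuple v\,y\,(g(\tuple v,y)=f(\tuple v)\times u(y))\wedge\psi^*_{\tuple v x}(g)$. For $\exists x\psi$ the teams $\X[F/x]$ range exactly over the arity-$(m+1)$ distributions whose marginal on the first $m$ coordinates is $f$ (the kernel $F$ is recovered from $g$ by conditioning), so the translation is $\exists g\,\big(\forall\tuple v\,(f(\tuple v)=\SUM_y g(\tuple v,y))\wedge\psi^*_{\tuple v x}(g)\big)$; both witnesses are genuine distributions, as required by the semantics of $\exists g$.

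I expect the disjunction to be the crux, since $\ESOf$ has neither a binary sum among its numerical terms nor quantification over a single rational, yet $\A\models_\X\psi\vee\theta$ demands a convex combination $\X=\Y\kcup\Z$ with an existentially chosen weight $k$. My plan is to realise $k$ as the value $c(u_0)$ of a quantified unary distribution $c$ forced to be supported on two distinct points $u_0\ne u_1$ (so $c(u_1)=1-k$ comes for free), and to realise the pointwise convex sum as a marginalisation over a fresh coordinate: I quantify arity-$m$ distributions $g,g'$ for $\Y,\Z$ and an auxiliary arity-$(m+1)$ distribution $G$ supported on last coordinate in $\{u_0,u_1\}$, and assert $G(\tuple v,u_0)=c(u_0)\times g(\tuple v)$, $G(\tuple v,u_1)=c(u_1)\times g'(\tuple v)$, the marginal condition $\forall\tuple v\,(f(\tuple v)=\SUM_y G(\tuple v,y))$, and the sub-translations $\psi^*_{\tuple v}(g)\wedge\theta^*_{\tuple v}(g')$. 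Then $\SUM_y G(\tuple v,y)=k\,g(\tuple v)+(1-k)\,g'(\tuple v)$, which is exactly $\X=\Y\kcup\Z$; the key point is that $g,g'$ are quantified as distributions in their own right and only tied to $G$ \emph{multiplicatively}, so the inductive hypothesis can be applied to them without ever dividing by $k$. The residual wrinkles are the use of the definable $0$ and of two distinct points, i.e.\ the hypothesis $\lvert A\rvert\ge2$; the one-element case is degenerate (the sole nonempty probabilistic team is a point mass) and I would dispatch it by a separate purely first-order sentence, combining the two via a top-level disjunction guarded by whether $\lvert A\rvert=1$. Throughout, only the outermost $f$ remains free, so $\phi^*(f)$ has exactly one free function variable as required.
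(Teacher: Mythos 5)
Your construction follows the paper's almost clause for clause: marginals as $\SUM$-terms for the probabilistic atoms, a support condition $\forall\tuple v\,(f(\tuple v)=0\lor\dots)$ for first-order literals, the marginal condition $\SUM_y g(\tuple v,y)=f(\tuple v)$ for $\exists x$, a uniformity condition for $\forall x$ (you pin a unary all-equal distribution and multiply; the paper asserts that all slices of $g$ are equal and sum to $f$ --- the same thing), and for $\lor$ an auxiliary two-point distribution together with a product-and-marginalise encoding of the convex combination, which is exactly the paper's $p,g,h,k$ gadget on lines \eqref{eq:dis1}--\eqref{eq:dis3}. Carrying the explicit domain tuple $\tuple v$, translating the independence atom without first normalising via Lemma~\ref{rules}, and treating $|A|=1$ separately are presentational refinements rather than substantive departures.

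There is, however, one genuine gap, in the disjunction case. The semantics permits the split $\X=\Y\kcup\Z$ to use an \emph{empty} sub-team: taking $k=1$ and $\Z$ the empty team (which satisfies every formula vacuously) shows that $\A\models_{\X}\psi\lor\theta$ holds already whenever $\A\models_{\X}\psi$. Your existentially quantified $g'$ is a probability distribution on $A^m$, hence sums to $1$ and always encodes a \emph{nonempty} team, so your translation additionally demands that $\theta$ be satisfiable by some nonempty team with the right domain. This fails, for instance, for $(x=x)\lor(x\neq x)$: every nonempty $\X$ satisfies it via $\Z=\emptyset$, yet your translation requires a distribution $g'$ with $\forall\tuple v\,(g'(\tuple v)=0\lor x\neq x)$, i.e.\ $g'\equiv 0$, which no distribution satisfies. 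The paper repairs exactly this by prepending the disjuncts $\psi^*_0(f)\lor\psi^*_1(f)$ to the convex-combination gadget, after observing that the disjunction clause can be restated as: $\X$ satisfies one of the disjuncts, or there exist two \emph{nonempty} teams $\Y,\Z$ and a ratio $q$ with $\X=\Y\kcup\Z$. Adding those two disjuncts to your translation closes the gap; the rest of your argument goes through.
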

\begin{proof}
We give a compositional translation ${}^*$ from  $\FOprob$ to $\ESOf$. For a subsequence $\tuple x_i$ of $ \tuple x$, we denote by $\tuple x^c_i$ a sequence $\tuple x\setminus \tuple x_i$, and by $\tuple x(\tuple y/\tuple x_i)$ a sequence obtained from $\tuple x$ by replacing $\tuple x_i$ pointwise with $\tuple y$.
\begin{align}
\text{If }& \phi(\tuple x)\text{ is of the form }R(\vec{x_0})\text{, then }\phi^*(f):= \forall \vec{x} \big( f(\vec{x})=0 \lor R(\tuple x_0) \big).   \notag\\
\text{If }&\phi(\tuple x)\text{ is of the form }\neg R(\vec{x_0})\text{, then }\phi^*(f):= \forall \vec{x} \big( f(\vec{x})=0 \lor \neg R(\tuple x_0) \big).   \notag\\
\text{If }& \phi(\tuple x)\text{ is }\tuple x_0 \approx \tuple x_1,
\text{then }\phi^*(f):=\forall \tuple z \, \SUM_{\vec{x}_0^c} f(\tuple x(\tuple z/\tuple x_0)) =  \SUM_{\vec{x}_1^c} f(\tuple x(\tuple z/\tuple x_1)). \notag\\ 
\text{If }& \phi(\tuple x)\text{ is }\pci{\tuple x_0}{\tuple x_1}{\tuple x_2}\text{ where }\tuple x_0,\tuple x_1, \tuple x_2\text{ are disjoint, then }
\phi^*(f) \dfn \forall \vec{x_0} \vec{x_1} \vec {x_2} \notag\\ 
&\SUM_{(\tuple x_0\tuple x_1)^c} f(\vec{x}) \times \SUM_{(\tuple x_0\tuple x_2)^c} f(\vec{x}) = 
\SUM_{(\tuple x_0\tuple x_1\tuple x_2)^c}f(\vec{x}) \times \SUM_{\tuple x_0^c} f(\vec{x}). \notag\\
\text{If }& \phi(\tuple x) \text{ is of the form }\pci{\tuple x_0}{\tuple x_1}{\tuple x_1}\text{ where }\tuple x_0,\tuple x_1\text{ are disjoint, then }\notag\\
&\phi^*(f):=\forall \tuple x_0\tuple x_1\big(\SUM_{(\tuple x_0\tuple x_1)^c}f(\tuple x)=0 \vee \SUM_{(\tuple x_0\tuple x_1)^c}f(\tuple x)=\SUM_{\tuple x_0^c}f(\tuple x)\big).\notag\\
\text{If }& \phi(\tuple x) \text{ is of the form } \psi_0(\tuple x)\land\psi_1(\tuple x)\text{, then }\phi^*(f) := \psi_0^*(f)\land\psi_1^*(f). \notag\\
\text{If }& \phi(\tuple x) \text{ is of the form }\psi_0(\tuple x)\lor\psi_1(\tuple x)\text{, then }\phi^*(f) := \psi^*_0(f) \lor \psi^*_1(f) \notag\\
&\lor \Big( \exists p g h k \big( \forall \vec{x} \forall y (y = l \lor y= r \lor (p(y)=0 \land k(\vec{x},y)=0 )) \label{eq:dis1}\\
&\; \land \forall \vec{x} ( k(\vec{x},l) = g(\vec{x}) \times p(l) \land k(\vec{x},r) = h(\vec{x}) \times p(r)) \label{eq:dis2}\\
&\; \land \forall \vec{x}\, (\SUM_y 
k(\vec{x},y)= f(\vec{x}))
 \land \psi^*_0(g) \land \psi^*_1(h) \big) \Big). \label{eq:dis3}\\
\text{If }& \phi(\tuple x) \text{ is } \exists y \psi(\tuple x,y)\text{, then } 
\phi^*(f) := \exists g \big( ( \forall \vec{x} \, \SUM_y g(\vec{x},y) = f(\vec{x})) \land \psi^*(g) \big). \notag\\
\text{If }& \phi(\tuple x) \text{ is of the form }\forall y \psi(\tuple x,y)\text{, then }\phi^*(f) :=\notag\\
 &\exists g \big(  \forall \vec{x} (\forall y \forall z  g(\vec{x},y) = g(\vec{x},z) \land \, \SUM_y g(\vec{x},y) = f(\vec{x})     ) \land \psi^*(g) \big).  \notag
\end{align}
The claim now follows via a straightforward induction on the structure of the formula. The cases for first-order and dependency atoms, and likewise for conjunctions, follow directly from the semantical clauses.

The case for disjunctions requires a bit more care. First note that $l$ (left) and $r$ (right) denote distinct constant symbols than can be defined by $\exists l \exists r \,l \not = r$ in the beginning of the translation ${}^*$.  Recall that a probabilistic team $\X$ satisfies a disjunction $(\phi \lor \psi)$ if and only if $\X$ satisfies either $\phi$ or $\psi$, or there exists two nonempty probabilistic teams $\Y$ and $\Z$ and a ratio $q\in\rat$ such that $\Y$ satisfies $\phi$, $\Z$ satisfies $\psi$, and, for each assignment $s$, it holds that $\X(s)= q \cdot \Y(s)+ (1-q) \cdot \Z(s)$. In the translation, we encode the value of $q$ by $p(l)$ and $(1-q)$ by $p(r)$. Line \eqref{eq:dis1} expresses that $p$ is such a function. We use $k(s(\vec{x}),l)$ and $k(s(\vec{x}),r)$ to encode the values of $q \cdot \Y(s)$ and $(1-q) \cdot \Z(s)$, respectively. Lines \eqref{eq:dis1} and \eqref{eq:dis2} together express that $k$ is such a function. Finally, the first part of line  \eqref{eq:dis3} expresses that $\forall s: \X(s)= q \cdot \Y(s)+ (1-q) \cdot \Z(s)$, whereas the latter part expresses that $\Y$ satisfies $\phi$, $\Z$ satisfies $\psi$.

The cases for the quantifiers follow directly by the semantical clauses.\qed
\end{proof}

\section{Translation from $\ESOf$ to $\FOprob$}\label{sect:fromESOf}
In this section, we construct a translation from $\ESOf$ to $\FOprob$. The proof utilises the observation that independence atoms and marginal identity atoms can be used to express multiplication and $\SUM$ in $\rat$, respectively. The translation then relates $\ESOf$ sentences in a certain normal form, presented in Lemma~\ref{lemma}, to open $\FOprob$ formulae.
Before this, we start by stating a lemma which expresses that existential quantification of a constant probability distribution $d$ can be characterised in $\FOprob$.  
Given a probabilistic team $\X\colon X\to \rat$, a tuple $\tuple x=(x_1, \ldots ,x_n)$ of fresh variables, and a probability distribution $d\colon A^n \to \rat$, we denote by $\X[d/\tuple x]$  the probabilistic team $\Y$ where
\(\Y(s(\tuple a/\tuple x))=\X(s)\cdot d(\tuple a)\)
for all $s\in X$.
\begin{lemma}\label{lem:yks}
Let $\phi(\tuple x) \dfn \exists \tuple y (\pmi{\tuple x}{\tuple y}\wedge \psi(\tuple x,\tuple y))$ be a $\FOprob$-formula with free variables from $\tuple x=(x_1, \ldots ,x_n)$. Then for all structures $\A$ and probabilistic teams $\X\colon X\to \rat$ where $\{x_1, \ldots ,x_n\}\subseteq \Dom(X)$, 
\[\A\models_{\X} \phi \iff \A\models_{\X[d/\tuple y]} \psi\text{ for some }d\colon A^{|\tuple y|} \to \rat.\]
\end{lemma}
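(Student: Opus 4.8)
The plan is to unfold the semantics of the outermost existential quantifier and of the atom $\pmi{\tuple x}{\tuple y}$, and then to transfer the truth of $\psi$ between the team produced by an arbitrary witnessing function and the canonical team $\X[d/\tuple y]$ by invoking locality (Proposition~\ref{prop:locality}). Throughout I use that $\tuple y$ is a tuple of fresh variables, so that existentially quantifying $\tuple y$ amounts to choosing a function $F\colon X\to p_{A^{|\tuple y|}}$ and forming $\X[F/\tuple y]$ with $\X[F/\tuple y](s(\tuple a/\tuple y))=\X(s)\cdot F(s)(\tuple a)$ (iterated single-variable quantification of $\tuple y$ being equivalent to one such choice), and that for a probabilistic team the total weight is $1$, so $\pmi{\tuple x}{\tuple y}$ holds in a team $\Y$ exactly when $|\Y_{\tuple x=\tuple b}|\cdot|\Y_{\tuple y=\tuple a}|=|\Y_{\tuple x\tuple y=\tuple b\tuple a}|$ for all value tuples $\tuple b,\tuple a$.

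For the left-to-right direction, suppose $\A\models_{\X}\phi$. Then there is $F\colon X\to p_{A^{|\tuple y|}}$ with $\A\models_{\Y}\pmi{\tuple x}{\tuple y}$ and $\A\models_{\Y}\psi$, where $\Y:=\X[F/\tuple y]$. I would set $d(\tuple a):=|\Y_{\tuple y=\tuple a}|$, which is a distribution on $A^{|\tuple y|}$, and compare $\Y$ with $\Y':=\X[d/\tuple y]$. A direct computation gives $|\Y'_{\tuple x\tuple y=\tuple b\tuple a}|=|\X_{\tuple x=\tuple b}|\cdot d(\tuple a)$, while, using $|\Y_{\tuple x=\tuple b}|=|\X_{\tuple x=\tuple b}|$ and the independence of $\tuple x$ and $\tuple y$ in $\Y$, we get $|\Y_{\tuple x\tuple y=\tuple b\tuple a}|=|\Y_{\tuple x=\tuple b}|\cdot|\Y_{\tuple y=\tuple a}|=|\X_{\tuple x=\tuple b}|\cdot d(\tuple a)$. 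Hence $\Y$ and $\Y'$ assign equal weights to every value profile of $\tuple x\tuple y$. Since the free variables of $\psi$ lie in $V:=\Var(\tuple x\tuple y)$, equality of these weights means $\Y\upharpoonright V=\Y'\upharpoonright V$; so by Proposition~\ref{prop:locality}, $\A\models_{\Y}\psi$ is equivalent to $\A\models_{\Y\upharpoonright V}\psi$, which in turn is equivalent to $\A\models_{\Y'}\psi$. Thus $\A\models_{\X[d/\tuple y]}\psi$ for this $d$.

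For the converse, suppose $\A\models_{\X[d/\tuple y]}\psi$ for some $d\colon A^{|\tuple y|}\to\rat$. I would take the constant witness $F(s):=d$ for all $s\in X$, so that $\X[F/\tuple y]=\X[d/\tuple y]=:\Y'$. The conjunct $\psi$ holds by hypothesis, and the independence atom is verified directly: $|\Y'_{\tuple x=\tuple b}|=|\X_{\tuple x=\tuple b}|$, $|\Y'_{\tuple y=\tuple a}|=d(\tuple a)$, and $|\Y'_{\tuple x\tuple y=\tuple b\tuple a}|=|\X_{\tuple x=\tuple b}|\cdot d(\tuple a)$, using $\sum_{\tuple a}d(\tuple a)=\sum_{\tuple b}|\X_{\tuple x=\tuple b}|=1$; together these give $|\Y'_{\tuple x=\tuple b}|\cdot|\Y'_{\tuple y=\tuple a}|=|\Y'_{\tuple x\tuple y=\tuple b\tuple a}|$, i.e.\ $\A\models_{\Y'}\pmi{\tuple x}{\tuple y}$. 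Witnessed by $F$, this yields $\A\models_{\X}\phi$.

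The main obstacle is the left-to-right direction, precisely the point that $\pmi{\tuple x}{\tuple y}$ constrains only the aggregate distribution of $\tuple y$ over each $\tuple x$-class and does \emph{not} force the witnessing $F$ to equal the constant distribution $d$ on individual assignments. The resolution is the observation that $\X[F/\tuple y]$ and $\X[d/\tuple y]$ nevertheless agree on the joint distribution of $\tuple x\tuple y$, which is exactly the information $\psi$ can access by locality. I expect the only care needed is to confirm that the free variables of $\psi$ are contained in $\Var(\tuple x\tuple y)$ (which holds since $\phi$ has free variables among $\tuple x$) and that agreement on all $\tuple x\tuple y$-value profiles coincides with equality of the restrictions of the two teams to $\Var(\tuple x\tuple y)$.
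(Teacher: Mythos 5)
Your proof is correct and follows essentially the same route as the paper's: both arguments rest on Proposition~\ref{prop:locality} together with the computation that $\pmi{\tuple x}{\tuple y}$ forces the $\tuple y$-values to be distributed according to the fixed marginal $d(\tuple a)=\lvert\Y_{\tuple y=\tuple a}\rvert$ independently of everything else relevant. The only organizational difference is that the paper invokes locality up front to reduce to the case $\Dom(X)=\{x_1,\ldots,x_n\}$, where $\X[F/\tuple y]$ and $\X[d/\tuple y]$ become literally equal, whereas you keep the general domain and invoke locality at the end to transfer satisfaction of $\psi$ between the two teams, which then agree only on their restrictions to $\Var(\tuple x\tuple y)$.
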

\begin{proof}
By the locality principle (Prop. \ref{prop:locality}) $\A\models_{\X} \phi$ if and only if $\A\models_{\X\upharpoonright \{x_1, \ldots ,x_n\}} \phi$.
Likewise it is straightforward to check that, for $d\colon A^{|\tuple y|}\to \rat$
\[
\A\models_{\X[d/\tuple y]} \psi  \text{ if and only if } \A\models_{\X\upharpoonright \{x_1, \ldots ,x_n\}[d/\tuple y]} \psi,
\]
since $\X[d/\tuple y]\upharpoonright \{x_1, \ldots ,x_n, \vec{y}\} = {\X\upharpoonright \{x_1, \ldots ,x_n\}[d/\tuple y]}$.
Accordingly, we may assume without loss of generality, that $\Dom(X)=\{x_1, \ldots ,x_n\}$.

Now
$\A\models_{\X} \phi$ iff there is a function $F\colon X\to p_A$ such that $\A\models_{\Y}\pmi{\tuple x}{\tuple y}\wedge \psi(\tuple x,\tuple y)$ where $\Y:=\X[F/\tuple y]$. Furthermore,
\[
\A\models_{\Y}\pmi{\tuple x}{\tuple y} \text{ iff } \lvert {\Y}_{\tuple x\tuple y=s(\tuple x)\tuple a} \rvert=\lvert {\Y}_{\tuple x= s(\tuple x)} \rvert \cdot  \lvert {\Y}_{\tuple y=\tuple a} \rvert  \text{ for all $s\in X$ and $\tuple a \in A^n$}.
\]
Since $\Dom(X)=\{x_1, \ldots ,x_n\}$, the right-hand side of the above is equivalent to
\[
\X(s)\cdot F(s)(\tuple a)= \X(s)\cdot \lvert {\Y}_{\tuple y=\tuple a} \rvert \text{ for all $s\in X$ and $\tuple a \in A^n$}.
\]
This is equivalent with saying that $\X[F/\tuple y]=\X[d/\tuple y]$ for some distribution  $d\colon A^n\to \rat$. \qed
\end{proof}

Before proceeding to the translation, we construct the following normal form for $\ESOf$ sentences.

\begin{lemma}\label{lemma}
Every $\ESOf$ sentence $\phi$ is equivalent to a  sentence $\phi^*$ of the form
$\exists \tuple f \forall  \tuple x\theta$,
where $\theta$ is quantifier-free and such that   its second sort identity atoms are of the form $f_i(\tuple u\tuple v) = f_j(\tuple u)\times f_k(\tuple v)$ or $f_i(\tuple u) = \SUM_{\tuple v}f_j(\tuple u\tuple v)$ for distinct $f_i,f_j,f_k$ such that at most one of them is not quantified.

\end{lemma}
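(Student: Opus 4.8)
The plan is to proceed in two phases. First I would reshape the quantifier prefix into the shape $\exists \tuple f \forall \tuple x$, and then flatten the numerical terms of the quantifier-free kernel so that every second-sort identity atom becomes a product atom $f_i(\tuple u\tuple v)=f_j(\tuple u)\times f_k(\tuple v)$ or a sum atom $f_i(\tuple u)=\SUM_{\tuple v}f_j(\tuple u\tuple v)$. As preprocessing I would rename all bound first-order and function variables apart. Since the syntax of $\ESOf$ keeps the first-order part in negation normal form, no negation blocks prenexing, so the usual equivalences let me pull every quantifier out over $\land$ and $\lor$, yielding a mixed prefix of $\exists f$, $\exists x$ and $\forall x$ in front of a quantifier-free kernel.

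To reach the target prefix I must move every existential quantifier in front of all the universals. A first-order existential is removed by the support trick $\exists x\,\psi \equiv \exists f_x\,\forall x\,(f_x(x)=0\lor\psi)$, valid because a probability distribution always has nonempty support; this trades $\exists x$ for a function existential plus a universal. A function existential $\exists f$ (or such an $f_x$) sitting under universals $\forall \tuple z$ is then Skolemized by raising its arity to absorb $\tuple z$: I replace it by a single distribution $g(\tuple z,\tuple y)$ and constrain its $\tuple z$-marginal to be uniform, expressible by $\SUM_{\tuple y}g(\tuple z,\tuple y)=\SUM_{\tuple y}g(\tuple z',\tuple y)$ under $\forall \tuple z\tuple z'$. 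This guarantees that every slice $g(\tuple z,\cdot)$ carries positive mass and hence normalizes to a genuine conditional distribution witnessing the original $\exists f$.

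For the second phase I would apply a Tseitin-style flattening to the kernel: for each subterm of each numerical term introduce a fresh, existentially quantified function symbol standing for that subterm, with a defining atom conjoined under the universal block. A subterm $t_1\times t_2$ yields a product atom and a subterm $\SUM_{\tuple v}t_1$ yields a sum atom, while a leftover equality between two single function applications $F_i=F_j$ is written as the sum atom $F_i(\tuple u)=\SUM_{\tuple v}F_j(\tuple u)$ with empty $\tuple v$; disequality atoms are treated by the same term-flattening and, since the lemma constrains only identity atoms, left in place. Whenever two factors of a product or the two sides of an equality would reuse one function symbol, I introduce a renamed copy (again via an empty-sum copy atom); this secures both the distinctness of $f_i,f_j,f_k$ and the requirement that at most one of them be unquantified (the single free function, if present). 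Variable sharing between the two factors of a product is harmless: the shared variables simply appear twice in the concatenated argument tuple $\tuple u\tuple v$ of $f_i$. Pulling all the fresh functions to the front keeps the prefix in $\exists\tuple f\forall\tuple x$ form, and a routine induction on the formula shows that each rewriting preserves equivalence.

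I expect the real difficulty to be the function-Skolemization of the first phase, precisely because $\ESOf$ quantifies only over genuine probability distributions: the Skolem function bundling a family of conditional distributions cannot itself be a distribution on each slice, so it cannot simply be introduced by $\exists$. The fix is to keep the single global distribution $g$ with uniform marginals and recover the conditionals by the normalization $g(\tuple z,\cdot)\mapsto |A|^{|\tuple z|}g(\tuple z,\cdot)$; to make the resulting normalization factors cancel, I would first homogenize every numerical atom with respect to the Skolemized function, multiplying the lower-degree side by suitable copies of $\SUM_{\tuple y}f(\tuple y)=1$ before substituting. After homogenization the scaling factors on the two sides of each atom match and disappear, leaving atoms purely in $g$, which the flattening of the second phase then turns into the required product and sum atoms.
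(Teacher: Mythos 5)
Your proposal is correct and follows essentially the same route as the paper's proof: the support trick $\exists x\,\psi \equiv \exists f_x\forall x\,(f_x(x)=0\lor\psi)$ for first-order existentials, raising the arity of a function quantified under $\forall \tuple z$ and forcing a uniform $\tuple z$-marginal (the paper enforces this via an auxiliary uniform distribution $d$ with $\SUM_{\tuple x}f^*(y,\tuple x)=d(y)$), compensating the resulting $1/|A|$-scaling mismatch in product atoms (your homogenization by $\SUM f=1$ plays exactly the role of the paper's extra factor $d(y)$), and a Tseitin-style flattening of numerical terms into the two admissible atom shapes. The only differences are presentational (global prenexing versus the paper's structural recursion), so no further comparison is needed.
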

\begin{proof}
First we define for each second sort term $i(\tuple x)$ a special formula $\theta_i$ defined recursively using fresh function symbols $f_i$ as follows:
\begin{itemize}
\item If $i(\tuple u)$ is $ g(\tuple u)$ where $g$ is a function symbol, then $\theta_i$ is defined as $ f_i(\tuple u)= g(\tuple u)$. (We may intepret $g(\tuple u)$ as $\SUM_{\emptyset} g(\tuple u)$).
\item If $i(\tuple u\tuple v)$ is $j(\tuple u)\times k(\tuple v)$
, then $\theta_i$ is defined as $\theta_j\wedge \theta_k\wedge f_i(\tuple u\tuple v )=f_j(\tuple u)\times f_k(\tuple v)$.
\item If $i(\tuple u)$ is $\SUM_{\tuple v}j(\tuple u\tuple v)$, then $\theta_i$ is defined as $ \theta_j\wedge f_i(\tuple u)=\SUM_{\tuple v}f_j(\tuple u\tuple v)$.
\end{itemize}
The translation $\phi\mapsto \phi^*$  then proceeds recursively on the structure of $\phi$.
\begin{enumerate}[(i)]
\item If $\phi$ is $  i(\tuple u)=j(\tuple v)$, then $\phi^*$ is defined as $\exists \tuple f (f_i(\tuple u)=f_j(\tuple v)  \wedge \theta_i\wedge 
\theta_j)$ 
where $\tuple f$ is lists the function symbols $f_k$ for each subterm $k$ of $i$ or $j$. If $\phi$ is $  i(\tuple u)\neq j(\tuple v)$, the translation is analogous.
\item If $\phi$ is an atom or negated atom of the first sort, then $\phi^*:=\phi$.
\item If $\phi$ is $ \psi_0\circ\psi_1$ where $\circ\in \{\vee,\wedge\}$, $\psi^*_0$ is $\exists\tuple  f_0\forall \tuple x_0\theta_0$, and $\psi^*_1$ is $\exists\tuple  f_1\forall\tuple x_1\theta_1$, then $\phi^*_1$ is defined as $\exists \tuple f_0\tuple f_1\forall \tuple x_0\tuple x_1 (\theta_0\circ \theta_1)$.
\item If $\phi$ is $\exists y\psi$ where $\psi^*$ is $\exists\tuple  f\forall \tuple x\theta$, then $\phi^*$ is defined as $\exists g\exists\tuple  f\forall \tuple x\forall y(g(y)=0\vee \theta)$.
\item  If $\phi $ is $\forall y\psi$ where $\psi^* $ is $\exists\tuple  f\forall \tuple x\theta$, then $\phi^*$ is defined as \[\exists\tuple  f^*\exists \tuple f_{\rm id}\exists d\forall yy'\forall \tuple x( d(y)=d(y')\wedge \bigwedge_{f^*\in \tuple f^*} \SUM_{\tuple x} f^*(y,\tuple x)=d(y)\wedge \theta^*)\] 
where $\tuple f^*$ is obtained from $\tuple f$ by replacing each $f$ from $\tuple f$ with $f^*$ such that $\ar(f^*)=\ar(f)+1$, $\tuple f_{\rm id}$ introduces new function symbol for each multiplication in $\theta$, and $\theta^*$ is obtained by replacing all second sort identities $\alpha$ of the form $ f_i(\tuple u\tuple v) = f_j(\tuple u)\times f_k(\tuple v)$ with 
\[f_{\alpha}(y,\tuple u\tuple v)=d(y)\times f^*_i(y,\tuple u\tuple v) \wedge f_{\alpha}(y,\tuple u\tuple v)=  f^*_j(y,\tuple u)\times f^*_k(y,\tuple v)\]
 and  $f_i(\tuple u) = \SUM_{\tuple v}f_j(\tuple u\tuple v)$ with $f^*_i(y,\tuple u) = \SUM_{\tuple v}f^*_j(y,\tuple u\tuple v)$

\item If $\phi$ is $\exists f\psi$ where $\psi^*$ is $\exists\tuple  f\forall \tuple x\theta$, then $\phi^*$ is defined as $\exists f\psi^*$.
\end{enumerate}
It is straightforward to check that $\phi^*$ is of the correct form and equivalent to $\phi$. What happens in (v) is that instead of guessing for all $y$ some distribution $f_y$ with arity $\ar(f)$, we guess a single distribution $f^*$ with arity $\ar(f)+1$ such that $f^*(y,\tuple u)=\frac{1}{|A|}\cdot f_y(\tuple u)$ where $A$ is the underlying domain of the structure. This is described by the existential quantification of a unary uniform distribution $d$ such that for all fixed $y$, $\SUM_{\tuple u}f^*(y,\tuple u)$ is $d(y)$. Then note that $f_y(\tuple u)= g_y(\tuple u')\cdot h_y(\tuple u'')$ iff $ \frac{1}{|A|}\cdot f^*(y,\tuple u)=  g^*(y,\tuple u')\cdot  h^*(y,\tuple u'')$ iff $ d(y)\cdot f^*(y,\tuple u)=  g^*(y,\tuple u')\cdot  h^*(y,\tuple u'')$. For identities over $\SUM$, the reasoning is analogous.\qed
\end{proof}

\begin{theorem}\label{thm:fromESOf}
Let $\phi(p)\in \ESOf$ be a sentence of the form $\exists \tuple f \forall  \tuple x\theta $ where $\theta$ is a quantifier-free $\FOf$ formula in which each second sort equality atom is of the form $f_i(\tuple x_i) = f_j(\tuple x_j)\times f_k(\tuple x_k)$ or $f_i(\tuple x_i) = \SUM_{\tuple x_k}f_j(\tuple x_k\tuple x_j)$ for distinct $f_i,f_j,f_k$ from $\{f_1, \ldots ,f_n\}\cup \{p\}$. Then there is a formula $\Phi\in \FOprob$ such that for all structures $\A$ and probabilistic teams $\X := p^\A$,
\[\A\models_{\X} \Phi \iff (\A,p)\models \phi.
\]
\end{theorem}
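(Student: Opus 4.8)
The plan is to build $\Phi$ by a construction that mirrors the shape $\exists\tuple f\forall\tuple x\,\theta$ of $\phi$, using the slogan announced in the section: marginal identity $\approx$ expresses $\SUM$ and (conditional) independence expresses $\times$. First I would set up, on top of the given team $\X=p^\A$, a \emph{master team} that simultaneously records all of the function symbols as independent marginals. For each existentially quantified $f_i$ of arity $m_i$ I introduce a fresh variable tuple $\tuple y_i$ of length $m_i$ and prepend a block $\exists\tuple y_i(\pmi{\cdots}{\tuple y_i}\wedge\cdots)$; by Lemma~\ref{lem:yks} this forces $\tuple y_i$ to carry an arbitrary distribution $d_i$, independent of everything introduced before it, so that $d_i$ ranges exactly over the admissible interpretations of $f_i$ and the requirement that $f_i$ be a genuine distribution (summing to $1$) is respected automatically. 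Since $p$ is not quantified but given through $\X$ itself, I treat it on the same footing by guessing one further block $\tuple y_p$ and pinning its marginal to $p^\A$ via the marginal identity $\tuple y_p\approx\tuple x_p$, where $\tuple x_p$ is the variable tuple over which $\X=p^\A$ lives.

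Next I would simulate the quantifier $\forall\tuple x$ by $\FOprob$ universal quantification of all the variables occurring in $\theta$. After spreading, the $\tuple x$-marginal is uniform and independent of every $\tuple y$-block, so \emph{every} value $\tuple a$ of $\tuple x$ is present with equal positive weight $w=\lvert A\rvert^{-\lvert\tuple x\rvert}$; this is why I copy $p$ into $\tuple y_p$ first, so that re-spreading $\tuple x_p$ does not destroy the encoding of $p$. The quantifier-free matrix $\theta$ is then translated compositionally, sending $\wedge$ and $\vee$ to $\FOprob$ conjunction and disjunction and sending first-order atoms to themselves, relying on the flatness of first-order formulae in probabilistic team semantics so that $x=y$, $x\neq y$, $R(\tuple x)$ and their Boolean combinations keep their intended assignment-wise meaning on every sub-team. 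For the two admissible second-sort atoms the intended global meaning is direct: a sum atom $f_i(\tuple u)=\SUM_{\tuple v}f_j(\tuple u\tuple v)$ says that $f_i$ is the $\tuple u$-marginal of $f_j$, hence becomes $\tuple y_i\approx\tuple y_j'$ for the sub-tuple $\tuple y_j'$ of $\tuple y_j$ matching $\tuple u$; a product atom $f_i(\tuple u\tuple v)=f_j(\tuple u)\times f_k(\tuple v)$ says that the distribution on $\tuple y_i=(\tuple y_i^u,\tuple y_i^v)$ is the product $d_j\otimes d_k$, hence becomes $\tuple y_i^u\approx\tuple y_j\wedge\tuple y_i^v\approx\tuple y_k\wedge\pmi{\tuple y_i^u}{\tuple y_i^v}$.

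The delicate point is that these global translations are correct only when the atom is read over the whole team. Under $\forall\tuple x$ a disjunction such as $R(\tuple x)\vee\bigl(f_i(\tuple x_i)=f_j(\tuple x_j)\times f_k(\tuple x_k)\bigr)$ must hold \emph{per value} of $\tuple x$, which is not the same as the disjunction of the two global statements. To localise, I would access $f_i(\tuple a_i)$ at the slice $\tuple x=\tuple a$ through the diagonal event $\tuple y_i=\tuple x_i$: because $\tuple y_i$ is independent of the uniform block $\tuple x$, the weight $\lvert\X_{\tuple x=\tuple a,\ \tuple y_i=\tuple a_i}\rvert$ equals $w\cdot f_i(\tuple a_i)$, and a conditional independence atom $\pci{\tuple x}{\tuple y_j}{\tuple y_k}$ turns the product $f_j(\tuple a_j)f_k(\tuple a_k)$ into the clean joint diagonal weight $\lvert\X_{\tuple x=\tuple a,\ \tuple y_j=\tuple a_j,\ \tuple y_k=\tuple a_k}\rvert=w\cdot f_j(\tuple a_j)f_k(\tuple a_k)$, with no stray normalisation factor. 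The per-slice equalities are then expressed by $\tuple x$-indexed marginal identity atoms comparing these diagonal weights, the diagonal events being marked by auxiliary flag variables in the spirit of Example~\ref{ex2}. The correctness proof is an induction on the structure of $\theta$.

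The step I expect to be the main obstacle is verifying this localisation against the existential team-splitting built into $\vee$. For a disjunction I must choose a split $\Y\kcup\Z=\X$ that assigns each $\tuple x$-slice to the disjunct witnessing it, and conversely show that any split making the $\FOprob$ formula true forces $\theta(\tuple a)$ at every slice $\tuple a$ (all of which are present, since $\tuple x$ is uniform). Here I would use that all atoms employed are homogeneous --- marginal identity of degree $1$ and independence of degree $2$ --- hence invariant under the rescaling $\X\mapsto c\cdot\X$ inherent in $\kcup$, so that a split which merely reweights slices, or even redistributes mass inside a slice, scales both sides of each $\tuple x$-indexed comparison equally and the per-slice check survives. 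The remaining items --- the degenerate case $\lvert A\rvert=1$ (where $0$ is not definable and the translation must be adjusted or the case handled separately), and the routine bookkeeping of fresh variables and of the marginal-identity pinning $\tuple y_p\approx\tuple x_p$ across the universal quantification --- I expect to be straightforward once the localisation is in place.
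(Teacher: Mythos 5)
Your overall architecture coincides with the paper's: quantify fresh blocks $\tuple y_i$ carrying constant distributions $f_i$ (via Lemma~\ref{lem:yks}), spread $\tuple x$ uniformly, and localise each second-sort atom at the slice $\tuple x=\tuple a$ by comparing diagonal events $\tuple y_i=\tuple x_i$ through flag variables and an $\tuple x$-indexed marginal identity $\tuple x\alpha\approx\tuple x\beta$. The gap is exactly where you predicted the main obstacle: the disjunction case. Your claim that \emph{any} split $\Y=\Z_0\kcup\Z_1$ satisfying $\Theta_0\vee\Theta_1$ forces $\theta_0(\tuple a)$ or $\theta_1(\tuple a)$ at every slice, justified by homogeneity of the atoms, is false. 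Scale-invariance only protects you against splits that multiply each whole slice by a constant; it says nothing about splits that redistribute mass \emph{inside} a slice. Concretely, if $\theta_0$ and $\theta_1$ are both product atoms and both fail at some $\tuple a$, each of $\Z_0,\Z_1$ can draw from the slice $\tuple x=\tuple a$ a sub-distribution skewed towards or away from the relevant diagonal events so that the two weights compared by $\tuple x\alpha\approx\tuple x\beta$ come out equal in $\Z_0$ (resp.\ $\Z_1$) even though they differ in $\Y$; the marginal identity atom only sees the flag-event weights of the sub-team, not the original conditional distribution of the slice. This is the very phenomenon the paper exploits when showing that the exact-cover reduction fails over probabilistic teams (the half-weight split), so it is not a corner case. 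A secondary symptom of the same problem is that your induction hypothesis, which is stated for teams of the special form $\X[M/\tuple x][f_1/\tuple y_1]\cdots[f_n/\tuple y_n]$, cannot even be invoked on $\Z_0,\Z_1$, since an arbitrary split need not produce teams of that form.

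The missing ingredient is a device that forces the split to be a union of whole slices. The paper translates $\theta_0\vee\theta_1$ as $\exists z\big(\pci{\tuple x}{z}{z}\wedge((\Theta_0\wedge z=0)\vee(\Theta_1\wedge \neg z=0))\big)$: the atom $\pci{\tuple x}{z}{z}$ forces $z$ to be constant on each $\tuple x$-slice, so the split induced by $z=0$ versus $z\neq 0$ assigns every slice wholesale, with its original conditional distribution, to one disjunct; the teams $\Z'_0,\Z'_1$ obtained this way are again of the form $\X[M_i/\tuple x][f_1/\tuple y_1]\cdots[f_n/\tuple y_n]$ and the induction hypothesis applies. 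With this guard added, the remaining items of your plan go through: the $\tuple y_p$ copy is harmless but unnecessary (the original team variables already carry $p$ independently of the freshly spread $\tuple x$), the extra $\pci{\tuple x}{\tuple y_j}{\tuple y_k}$ in the product case is already implied by the block independences in $\Psi$, and the $|A|=1$ degeneracy is a genuine but minor caveat.
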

\begin{proof}
We define $\Phi$ as
\begin{equation*}
\Phi \dfn \forall \tuple x \exists \tuple y_1 \ldots \tuple y_n (\Theta \wedge \Psi) 
\end{equation*}
  where $\tuple x=(x_1, \ldots ,x_m)$,  $\tuple y_i$ are sequences of variables of length $\ar(f_i)$, $\Theta$ is a compositional translation from $\theta$, and 
  \begin{equation}\label{eq2}
  \Psi \dfn  \bigwedge_{i=1}^n \pmi{\tuple x\tuple y_1 \ldots \tuple y_{i-1}}{\tuple y_i}.
  \end{equation}
By Lemma \ref{lem:yks} it suffices to show 
  that for  all  distributions $f_1, \ldots ,f_n$, subsets  $M\sub A^m$, and probabilistic teams $\Y=\X  [M/\tuple x]  [f_1/\tuple y_1]  \ldots   [f_n/\tuple y_n]$,
  \begin{equation}\A\models_{\Y} \Theta
  \text{ iff }(\A, p,f_1, \ldots ,f_n) \models \theta(\tuple a)\text{ for all }\tuple a\in M.
  \end{equation} We show the claim by structural induction on the construction of $\Theta$.
 \begin{enumerate}
 \item If $\theta$  is an atom of the first sort, it clearly suffices to let $\Theta =\theta$. 
  \item Assume $\theta$ is of the form $f_i(\tuple x_i) = f_j(\tuple x_j)\times f_k(\tuple x_k)$. Then $\Theta$ is defined as 
 \[\Theta \dfn \exists \alpha \beta \Big((\alpha =0 \leftrightarrow \tuple x_i =\tuple y_i)  \wedge (\beta =0 \leftrightarrow \tuple x_j\tuple x_k =\tuple y_j\tuple y_k)\wedge \tuple x\alpha\approx \tuple x\beta) \Big).
 \]
 Assume that $(\A, p,f_1, \ldots ,f_n) \models \theta(\tuple a)\text{ for any given }\tuple a\in M$. Then we have $f_i(\tuple a_i) = f_j(\tuple a_j)\cdot f_k(\tuple a_k)$. We define  functions $F_{\alpha},F_{\beta}\colon \Y\to \{0,1\}$ so that $F_{\alpha}(s)=0$ iff $s(\tuple x_i)=s(\tuple y_i)$, and $F_{\beta}(s)=0$ iff $s(\tuple x_j\tuple x_k)=s(\tuple x_j\tuple x_k)$. It suffices to show that $\A\models_{\Z} \tuple  x \alpha\approx \tuple x \beta$ where $\Z=\Y[F_{\alpha}/\alpha][F_{\beta}/\beta]$. By the construction of $\Z$, we have $|\Z_{\tuple x \alpha=\tuple a 0}|=|\Z_{\tuple x\tuple y_i=\tuple a\tuple a_i}|=
 |\Y_{\tuple x=\tuple a}|\cdot f_i(\tuple a_i)$. Similarly, and using the hypothesis, we have   $|\Z_{\tuple x \beta=\tuple a 0}|=|\Z_{\tuple x\tuple y_j\tuple y_k=\tuple a\tuple a_j\tuple a_k}|=
 |\Y_{\tuple x=\tuple a}|\cdot f_j(\tuple a_j)\cdot f_k(\tuple a_k)= |\Y_{\tuple x=\tuple a}|\cdot f_i(\tuple a_i)$. Furthermore, since we have $|\Z_{\tuple x \alpha=\tuple a 1}|=|\Y_{\tuple x=\tuple a}|\cdot (1-f_i(\tuple a_i))=|\Z_{\tuple x \beta=\tuple a 1}|$, it follows that $\A\models_{\Y}\Theta$.

 Assume $\A \models_{\Y} \Theta$, and let $\Z$ be the extension of $\Y$ to $\alpha,\beta$ where $Z_{\alpha=0}=Z_{\tuple x_i=\tuple y_i}$ and $\Z_{\beta=0}=\Z_{\tuple x_j\tuple x_k=\tuple y_j\tuple y_k}$. Then $\A\models_{\Z}\tuple x\alpha \approx \tuple x \beta$ since $|\Y_{\tuple x =\tuple a}|\cdot f_i(\tuple a_i)=
 |\Y_{\tuple x =\tuple a}|\cdot |\Y_{\tuple y_i=\tuple a_i}|=
  |\Y_{\tuple x\tuple y_i=\tuple a\tuple a_i}|=
  |\Y_{\tuple x\tuple x_i=\tuple a\tuple y_i}|=
  |\Z_{\tuple x\alpha=\tuple a 0}|=  |\Z_{\tuple x\beta=\tuple a 0}|=
  |\Y_{\tuple x\tuple x_j\tuple x_k=\tuple a \tuple y_j\tuple y_k}|=
  |\Y_{\tuple x\tuple y_j\tuple y_k=\tuple a \tuple a_j\tuple a_k}|=
    |\Y_{\tuple x=\tuple a }|\cdot  |\Y_{\tuple y_j= \tuple a_j}|\cdot     |\Y_{\tuple y_k=\tuple a_k}|=
        |\Y_{\tuple x=\tuple a }|\cdot f_j(\tuple a_j)\cdot     f_k(\tuple a_k)$
   for all $\tuple a\in M$. 
 
 \item Assume $\theta$ is of the form
 $f_i(\tuple x_i) = \SUM_{\tuple x_k}f_j(\tuple x_k\tuple x_j)$. 
 We define $\Theta$  as 
 \[\Theta \dfn \exists \alpha\beta\Big((\alpha=0\leftrightarrow \tuple x_i=\tuple y_i) \wedge (\beta=0 \leftrightarrow  \tuple x_j=\tuple y_j )\wedge \tuple x \alpha\approx \tuple x \beta\Big).
 \]
Assume that $(\A, p,f_1, \ldots ,f_n) \models \theta(\tuple a)\text{ for any given }\tuple a\in M$. Then  $f_i(\tuple a_i) = \SUM_{\tuple x_k}f_j(\tuple x_k\tuple x_j)$.
 We define  functions $F_{\alpha},F_{\beta}\colon \Y\to \{0,1\}$ such that $F_{\alpha}(s)=0$ iff $s(\tuple x_i)=s(\tuple y_i)$, and $F_{\beta}(s)=0$ iff $s( \tuple x_j)=s(\tuple y_j )$. Then $\A\models_{\Z} \tuple  x \alpha\approx \tuple x \beta$ because 
$|\Z_{ \tuple x \alpha=\tuple a0}|=| \Y_{\tuple x\tuple x_i=\tuple a \tuple y_i}|=|\Y_{\tuple x \tuple y_i=\tuple a \tuple a_i}|= |\Y_{\tuple x=\tuple a}|\cdot f_i(\tuple a_i)= |\Y_{\tuple x=\tuple a}|\cdot \SUM_{\tuple x_k}f_j(\tuple x_k\tuple a_j)=|\Y_{\tuple x=\tuple a}|\cdot |\Y_{\tuple y_j=\tuple a_j}|=|\Y_{\tuple x\tuple y_j=\tuple a\tuple a_j}|=|\Y_{\tuple x\tuple x_j=\tuple a\tuple y_j}|=|\Z_{\tuple x\beta=\tuple a 0}|$. Furthermore, since $|\Z_{ \tuple x \alpha=\tuple a1}|=|\Z_{\tuple x\beta=\tuple a 1}|$ it follows that $\A\models_{\Y}\Theta$.

Assume that $\A\models_{\Y}\Theta$, and let $\Z$ be the extension of $\Y$ to $\alpha,\beta$ where $Z_{\alpha=0}=Z_{\tuple x_i=\tuple y_i}$ and $\Z_{\beta=0}=\Z_{\tuple x_j=\tuple y_j}$. Analogously to the previous case, we obtain $\A\models_{\Z}\tuple x \alpha \approx \tuple x \beta$ since  $|\Y_{\tuple x =\tuple a}|\cdot f_i(\tuple a_i)=
  |\Z_{\tuple x\alpha=\tuple a 0}|=  |\Z_{\tuple x\beta=\tuple a 0}|=
     |\Y_{\tuple x\tuple y_j=\tuple a  \tuple a_j}|=
  |\Y_{\tuple x=\tuple a }|\cdot  |\Y_{\tuple y_j= \tuple a_j}|=
              |\Y_{\tuple x=\tuple a }|\cdot \SUM_{\tuple x_k} f_j(\tuple a_j)$
   for all $\tuple a\in M$. 

 \item Assume $\theta$ is $\theta_0\wedge \theta_1$. Then we let $\Theta \dfn \Theta_0\wedge \Theta_1$, and the claim follows by a straightforward argument.
 \item Assume $\theta$ is $\theta_0\vee \theta_1$. Then we let
 \[
 \Theta \dfn \exists z \Big(\pci{\tuple x}{z}{z}\wedge (\Theta_0\wedge z=0) \vee (\Theta_1 \wedge \neg z=0)\Big).
 \]
 Assume $(\A, p,f_1, \ldots ,f_n) \models \theta_0\vee\theta_1$ for all $\tuple a \in M$. Then we find $M_0\cup M_1 =M$, $M_0\cap M_1=\emptyset$, such that  $(\A, p,f_1, \ldots ,f_n) \models \theta_i$ for all $\tuple a \in M_i$. We define $F: Y\to p_A$ so that $F_z(s)=c_i$ if $s(\tuple x)\in M_i$; by $c_i$ we denote the distribution 
 \[c_i(a) \dfn \begin{cases}
 1\text{ if $a=i$,}\\
 0\text{ otherwise.}
 \end{cases}
 \]
 Letting  $\Z_i=\X[M_i/\tuple x][f_1/\tuple y_1]\ldots [f_n/\tuple y_n][c_i/z]$, it follows that $\Z=\Y[F/z] = \Z_0\kcup \Z_1$ for  $k=\frac{M_0}{M}$. By the induction hypothesis $\A\models_{\Z_i} \Theta_i$, and accordingly $\A\models_{\Z_i} \Theta_0 \wedge z_i$. Since $\A\models_{\Z} \pci{\tuple x}{z}{z}$, we obtain by Proposition \ref{prop:locality} that $\A\models_{\Y} \Theta$.
 
Assume $\A\models_{\Y}\Theta$, and let $F\colon Y\to p_A$ be such that $\A\models_{\Z}\pci{\tuple x}{z}{z}\wedge ((\Theta_0\wedge z=0) \vee (\Theta_1 \wedge \neg z=0))$ for $\Z=\Y[F/z]$. Consequently,  $\A\models_{\Z'_0}\Theta_0$ and $\A\models_{\Z'_1}\Theta_1$ where $k\Z'_0= \Z_{z=0}$ and $(1-k)\Z'_1=\Z_{z=1}$ for  $k=|\Z_{z=0}|$. Since $\Z$ satisfies $\pci{\tuple x}{z}{z}$, we have
furthermore that  either $\Z_{\tuple x=\tuple a}=\Z_{\tuple xz=\tuple a 0}$ or $\Z_{\tuple x=\tuple a}=\Z_{\tuple xz=\tuple a 1}$ for all $\tuple a\in M$. This entails that $\Z_{z=0}=\Z_{\tuple x\in M_0}$ for some $M_0\sub M$. Therefore, $\Z'_0= \frac{|M|}{| M_0|}(\X[M/\tuple x][f_1/\tuple y_1]\ldots [f_n/\tuple y_n])_{\tuple x\in M_0}=\X[M_0/\tuple x][f_1/\tuple y_1]\ldots [f_n/\tuple y_n]$. By the induction hypothesis, we then obtain $(\A, p,f_1, \ldots ,f_n) \models \theta_0$ for all $\tuple a \in M_0$, and by analogous reasoning that $(\A, p,f_1, \ldots ,f_n) \models \theta_1$ for all $\tuple a \in M\setminus M_0$. Consequently, $(\A, p,f_1, \ldots ,f_n) \models \theta$ for all $\tuple a\in M$ which concludes the proof. \qed 

 \end{enumerate}

\end{proof}

\section{Complexity of $\FO(\approx)$ in multiteams vs. probabilistic teams}
 One of the fundamental results in logics in team semantics state that, in contrast to dependence and independence logics that correspond to existential second-order logic (accordingly, $\NP$), the expressivity of inclusion logic equals only that of positive greatest fixed-point logic and thus $\Ptime$ over finite ordered models \cite{Galliani:2011,gh13,vaananen07}. In this section, we consider the complexity of $\FO(\approx)$ that can be thought of as a probabilistic variant of inclusion logic. We present a formula $\phi\in \FO(\approx)$ which captures an $\NP$-complete property of multiteams (the example works under both strict and lax semantics introduced by Durand~et~al.~ \cite{foiks/0001HKMV16}). The possibility of expressing similar properties in probabilistic teams is left open. It is worth noting that our reduction is similar to the ones presented for quantifier-free dependence and independence logic formulae under team semantics \cite{2010Jarmo,2015arXiv150301144D} (see also the recent survey on complexity aspects of logics in team semantics \cite{DKV}).  
 
The following example relates $\FO(\approx)$ to the exact cover problem, a well-known $\NP$-complete problem \cite{Garey:1990}. Given a collection 
$\mathcal{S}$ of subsets of a set $A$, an \emph{exact cover} is a subcollection 
$\mathcal{S}^{*}$ of 
$\mathcal{S}$ such that each element in $A$ is contained in exactly one subset in 
$\mathcal{S}^{*}$. 
\begin{example}
Consider an exact cover problem over $A=\{1,2,3,4\}$ and $\mathcal{S}=\{S_1=\{1,2,3\},S_2=\{2\},S_4=\{1,3,4\}\}$. We construct a multiteam $\mX$ as follows. The multiteam $\mX$, depicted in Fig. \ref{examplefig}, is a constant function mapping all assignments to $1$.
\begin{figure}
	\centering
\begin{tabular}{ccccc}
\multicolumn{5}{c}{Multiteam $\mX$} \\ \toprule
\texttt{element} & \texttt{set} & \texttt{left} & \texttt{right}&$\mX(s)$\\\hline
$0$ &  $S_1$ & $1$ & $2$&$1$\\
$0$ &  $S_1$ & $2$ & $3$&$1$\\
$0$ &  $S_1$ & $3$ & $1$&$1$\\\hdashline
$0$ &  $S_2$ & $2$ & $2$&$1$\\\hdashline
$0$ &  $S_3$ & $1$ & $3$&$1$\\
$0$ &  $S_3$ & $3$ & $4$&$1$\\
$0$ &  $S_3$ & $4$ & $1$&$1$\\\hline
$1$ & $0$ & $0$ & $0$&$1$\\
$2$ & $0$ & $0$ & $0$&$1$\\
$3$ & $0$ & $0$ & $0$&$1$\\
$4$ & $0$ & $0$ & $0$&$1$\\
\bottomrule
\end{tabular}
\quad
\begin{tabular}{ccccccc}
\multicolumn{5}{c}{Probabilistic team $\X$} \\ \toprule
\texttt{element} & \texttt{set} & \texttt{left} & \texttt{right}&$\X(s)$&$\Y$&$\Z$\\\hline
$0$ &  $S_1$ & $1$ & $2$&$1/10$&$1/2$&$1/2$\\
$0$ &  $S_1$ & $2$ & $1$&$1/10$&$1/2$&$1/2$\\\hdashline
$0$ &  $S_2$ & $2$ & $3$&$1/10$&$1/2$&$1/2$\\
$0$ &  $S_2$ & $3$ & $2$&$1/10$&$1/2$&$1/2$\\\hdashline
$0$ &  $S_3$ & $3$ & $1$&$1/10$&$1/2$&$1/2$\\
$0$ &  $S_3$ & $1$ & $3$&$1/10$&$1/2$&$1/2$\\\hline
$1$ & $0$ & $0$ & $0$&$1/10$&&$1$\\
$2$ & $0$ & $0$ & $0$&$1/10$&&$1$\\
$3$ & $0$ & $0$ & $0$&$1/10$&&$1$\\
$4$ & $0$ & $0$ & $0$&$1/10$&&$1$\\
\bottomrule\\
\end{tabular}
\caption{A multiteam $\mX$ and a probabilistic team $\X$ \label{examplefig}}
\end{figure}
For each element $i$ of a subset $S_j$, we create an assignment that maps \texttt{element} to $0$, \texttt{set} to $s_j$, \texttt{left} to $i$, and \texttt{right} to the next element in $S_j$ (under some ordering). Also, if $S_j=\{i\}$, then right is mapped to $i$. In our example case these assignments appear above the solid line of the multiteam $\mX$ in Fig. \ref{examplefig}. Furthermore, for each element $i$ of $A$ we create an assignment that maps $\texttt{element}$ to $i$ and all other variables to $0$.
The answer to the exact cover problem is then positive iff $\mX$ satisfies
\begin{equation}\label{eq:example}
\phi := \texttt{set}\neq 0 \vee (\texttt{element} \approx \texttt{left}  \wedge \texttt{set},\texttt{right} \approx \texttt{set},\texttt{left} ).
\end{equation}
Note that since $\phi$ consists only of variables and connectives, we do not need to concern structures; we write $\mX\models \phi$ instead of $\A \models_\mX\phi$.
Now $\mX\models \phi$ if and only if $\mZ\models \texttt{set}\neq 0$ and $\mY\models \texttt{element} \approx \texttt{left}  \wedge \texttt{set},\texttt{right} \approx \texttt{set},\texttt{left}$, for some $\mZ$, $\mY$ such that $\mZ\uplus\mY=\mX$. Note that any subset of the assignments above the solid line in Fig. \ref{examplefig} satisfy $\texttt{set}\neq 0$ and could be a priori assigned to $\mZ$.  Note also that all of the assignments below the solid line must be assigned to the team $\mY$.
Henceforth, the conjunct $\texttt{element} \approx \texttt{left}$ forces to select assignments from above the solid line to $\mY$ exactly one assignment for each element of $A$. Then $\texttt{set},\texttt{right} \approx \texttt{set},\texttt{left}$ enforces that this selection either subsumes a subset $S_i$ or does not intersect it at all. In the example case, we can select the segments that corresponds to  sets $S_1$ and $S_2$.

The same reduction does not work for probabilistic teams. The probabilistic team $\X$ in Fig. \ref{examplefig} corresponds to the exact cover problem defined over $A=\{1,2,3\}$ and $\mathcal{S}=\{S_1=\{1,2\},S_2=\{2,3\},S_4=\{3,1\}\}$. This instance does not admit an exact cover. However, for satisfaction of \eqref{eq:example} by $\X$, taking half weights of the upper part for $\Y$ and all the remaining weights for $\Z$, we have  $\A\models_{\Y}\texttt{set}\neq 0$ and $\A\models_{\Z}\texttt{element} \approx \texttt{left}  \wedge \texttt{set},\texttt{right} \approx \texttt{set},\texttt{left} $ where $\X=\Y\kcup\Z$ for $k=\frac{3}{10}$.
\end{example}

It is straightforward to generalise the previous example to obtain the following result.
\begin{corollary}
Data complexity of the quantifier-free fragment of $\FO(\approx)$ under multiteam semantics is $\NP$-hard. This remains true for very simple fragments as  $\texttt{set}\neq 0 \vee (\texttt{element} \approx \texttt{left}  \wedge \texttt{set},\texttt{right} \approx \texttt{set},\texttt{left})$ is such a formula for which model checking is hard for $\NP$.
\end{corollary}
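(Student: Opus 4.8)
The plan is to turn the construction of the preceding example into a polynomial-time many-one reduction from \textsc{Exact Cover} to the model-checking problem of the single fixed formula $\phi$ of~\eqref{eq:example}, thereby establishing that the data complexity is $\NP$-hard. Given an arbitrary instance consisting of a set $A=\{1,\dots,n\}$ and a collection $\mathcal{S}=\{S_1,\dots,S_m\}$ of subsets of $A$, I would build a multiteam $\mX$ over the variables $\texttt{element},\texttt{set},\texttt{left},\texttt{right}$ exactly as in the example: fix for each $S_j$ a cyclic ordering $a_1,\dots,a_k$ of its elements and add, for every $i$, an assignment mapping $\texttt{element}\mapsto 0$, $\texttt{set}\mapsto S_j$, $\texttt{left}\mapsto a_i$, $\texttt{right}\mapsto a_{(i\bmod k)+1}$ (with $\texttt{right}\mapsto a_1$ when $k=1$); and for every element $a\in A$ add an assignment mapping $\texttt{element}\mapsto a$ and all other variables to $0$. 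All multiplicities are set to $1$. This is computable in polynomial time in the size of the instance, and since the formula $\phi$ is fixed, a correctness proof yields the data-complexity lower bound.

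For correctness I would show that $\mX\models\phi$ iff the instance admits an exact cover. Since $\phi=\texttt{set}\neq 0\vee(\cdots)$, any satisfying split $\mX=\mY\uplus\mZ$ must put every \emph{element row} (those with $\texttt{set}=0$) into $\mY$, as only $\mY$ is required to satisfy the conjunction while $\mZ$ satisfies $\texttt{set}\neq 0$. The first conjunct $\texttt{element}\approx\texttt{left}$ then forces, reading the atom value by value: from $\texttt{element}=0$ versus $\texttt{left}=0$ that exactly $n$ \emph{set rows} lie in $\mY$, and from $\texttt{element}=a$ versus $\texttt{left}=a$ that for each $a\in A$ precisely one set row in $\mY$ has $\texttt{left}=a$. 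Hence the set rows selected into $\mY$ cover each element by its $\texttt{left}$-value exactly once.

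The crux is the second conjunct $\texttt{set},\texttt{right}\approx\texttt{set},\texttt{left}$, which I would read as an Eulerian balance condition: restricted to any fixed $S_j$, it demands that the multiset of $\texttt{right}$-values among the selected $S_j$-rows equals the multiset of their $\texttt{left}$-values. Viewing the rows of $S_j$ as the arcs of the single directed cycle $a_1\to a_2\to\cdots\to a_k\to a_1$, this says that each vertex has equal in- and out-degree in the selected subgraph; since the underlying graph is one simple cycle, the only solutions are the empty selection and the whole cycle. Thus each $S_j$ lies entirely in $\mY$ or entirely in $\mZ$, and by the covering condition the fully selected sets form an exact cover; conversely an exact cover yields a satisfying split. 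I expect this cycle-gadget argument to be the main obstacle, as it is the only non-routine step. Finally I would note that the reasoning is insensitive to whether disjunction is read with strict ($\uplus$) or lax union, since $\phi$ is quantifier-free and the forced placement of the element rows together with the balance condition leave no further freedom; this justifies the parenthetical claim that the example works under both the strict and the lax semantics.
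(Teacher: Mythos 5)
Your proposal is correct and follows essentially the same route as the paper, which simply generalises the preceding exact-cover example; your cycle-gadget (Eulerian balance) argument for the conjunct $\texttt{set},\texttt{right}\approx\texttt{set},\texttt{left}$ is exactly the intended reason each $S_j$ is selected entirely or not at all, a point the paper states without detailed proof. Your additional observations (the formula is fixed so this is genuinely a data-complexity bound, and the argument is insensitive to strict versus lax disjunction because all constraints fall on the side $\mY$ containing the element rows) match the paper's parenthetical claims.
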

The obvious brute force algorithm gives inclusion to $\NP$.
\begin{theorem}
Data complexities of $\FO(\approx)$ and the quantifier-free fragment of $\FO{(\approx)}$ under multiteam semantics are $\NP$-complete.
\end{theorem}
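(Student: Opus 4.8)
The plan is to combine the hardness already in hand with a matching upper bound. For hardness, observe that the corollary above establishes that model checking for the quantifier-free fragment of $\FO(\approx)$ under multiteam semantics is $\NP$-hard; since the quantifier-free fragment is syntactically contained in full $\FO(\approx)$, the latter is $\NP$-hard as well. Symmetrically, an $\NP$ upper bound for full $\FO(\approx)$ immediately yields one for its quantifier-free fragment, so it suffices to place model checking for full $\FO(\approx)$ in $\NP$; both $\NP$-completeness statements then follow at once.

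For the upper bound I would spell out the obvious guess-and-check procedure following the semantic clauses of Definition~\ref{laxmulti} top-down. Starting from the input multiteam $\mX$, the algorithm processes $\phi$ recursively: at a disjunction $\psi \lor \theta$ it nondeterministically guesses a partition $\mX = \mY \uplus \mZ$ and recurses into both parts; at an existential quantifier $\exists y\,\psi$ it nondeterministically guesses the quantified multiteam arising from some splitting function $F$ and recurses; at a universal quantifier $\forall y\,\psi$ it applies the deterministic operation $\mX[A/y]$; and at a conjunction it recurses into both conjuncts on the same multiteam. At the leaves it checks the first-order literals by inspecting every assignment of positive multiplicity, and it verifies each marginal identity atom $\tuple x \approx \tuple y$ by computing the frequency vectors $|\mX_{\tuple x = \tuple a}|$ and $|\mX_{\tuple y = \tuple a}|$ and comparing them. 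Since $\phi$ is fixed (we measure data complexity), the recursion has constant depth; each quantifier increases the number of distinct assignments in play by a factor of at most $|A|$, so the support of every intermediate multiteam stays polynomial in the size of $(\A,\mX)$, while the multiplicities, summed over at most $|A|$ sources a constant number of times, stay polynomially bounded in bit-length. Hence every nondeterministic guess has polynomial size and every leaf check runs in polynomial time, yielding an $\NP$ algorithm.

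The one point requiring care---and the main obstacle---is size control of the existential step under lax semantics. A naive implementation would guess the splitting function $F\colon \cset{\mX}\to\Po(A)\setminus\{\emptyset\}$ outright, but $\cset{\mX}$ can be exponentially large when multiplicities are encoded in binary, so such a witness need not fit in polynomial space. I would circumvent this by guessing the resulting multiteam $\mX[F/y]$ directly, recording for each assignment $s$ of the current support and each $b\in A$ the multiplicity $m_{s,b}$ contributed to $s(b/y)$, and then checking in polynomial time that these numbers are realizable by some such $F$: a simple counting condition asserting that each of the $\mX(s)$ copies selects a nonempty set of values, equivalently $m_{s,b}\le \mX(s)$ for all $b$ together with $\sum_b m_{s,b}\ge \mX(s)$. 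The same device handles the disjunction splits and works verbatim under the strict existential semantics, where the condition simplifies to $\sum_b m_{s,b}=\mX(s)$. Combining this $\NP$ membership with the hardness inherited from the corollary establishes $\NP$-completeness for both $\FO(\approx)$ and its quantifier-free fragment.
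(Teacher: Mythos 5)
Your proof is correct and follows essentially the same route as the paper: $\NP$-hardness is inherited from the exact-cover reduction behind the preceding corollary, and membership is the obvious guess-and-check procedure along the semantic clauses of multiteam semantics, which the paper compresses into the single remark that ``the obvious brute force algorithm gives inclusion to $\NP$.'' Your extra care at the existential step---guessing the resulting multiteam together with the contribution counts $m_{s,b}$ and verifying realizability via $m_{s,b}\le \mX(s)$ and $\sum_b m_{s,b}\ge \mX(s)$, rather than guessing the possibly exponentially large splitting function $F$ on $\cset{\mX}$ when multiplicities are binary-encoded---is a sound and worthwhile refinement of a detail the paper leaves implicit.
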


\section{Conclusion}
In this article, we have initiated a systematic study of probabilistic team semantics. Some features of our semantics have been discussed in the literature but the logic $\FOprob$ has not been studied before in the probabilistic framework. Probabilistic logics with team semantics have already been applied in the context of so-called Bell's Inequalities of quantum mechanics \cite{hpv14}. On the other hand,  our work is in part motivated by the study of implication problems of database and probabilistic dependencies. Independence logic has recently been used to give a finite axiomatisation for the implication problem of independence atoms (i.e., EMVD's) and inclusion dependencies \cite{HannulaK16}. It is an interesting open question to apply our probabilistic logic to analyse the implication problem of conditional independence statements whose exact complexity is still open  \cite{Gyssens2014628,wbw00}.

\vspace{-1mm}
\section*{Acknowledgements}
The second author was supported by  grant 3711702 of the Marsden Fund. The third author was supported by  grant 308712  of the Academy of Finland.
This work was supported in part by the joint grant by the DAAD (57348395) and the Academy of Finland (308099).
We also thank the anonymous referees for their helpful suggestions.

\bibliographystyle{splncs03}

\end{document}